\colorlet{shadecolor}{yellow}
\algnewcommand{\Initialization}[1]{
  \State \textbf{Initialization}}
\algnewcommand{\Repeatu}[1]{
  \State \textbf{Repeat}}
\algnewcommand{\until}[1]{
  \State \textbf{until}}
\newtheorem{theorem}{Theorem}           
\newtheorem{lemma}{Lemma}
\begin{document}
\bstctlcite{}
    \title{Unimodular Waveform Design that Minimizes PSL of Ambiguity Function over A Continuous Doppler Frequency Shift Region of Interest}
    \author{Weiting Lin,~\IEEEmembership{Student Member,~IEEE,}
            Yuwei Chang,
            Borching Su,~\IEEEmembership{Member,~IEEE}
}  
\maketitle
\begin{abstract}
In active sensing systems, waveforms with ambiguity functions (AFs) of low peak sidelobe levels (PSLs) across a time delay and Doppler frequency shift plane (delay-Doppler plane) of interest are desirable for reducing false alarms.
Additionally, unimodular waveforms are preferred due to hardware limitations.
In this paper, a new method is proposed to design unimodular waveforms with PSL suppression over a continuous Doppler frequency shift region, based on the discrete-time ambiguity function (DTAF).
Compared with existing methods that suppress PSL over grid points in the delay-Doppler plane by using the discrete ambiguity function (DAF), we regard the DTAF optimization problem as of more practical interest because the Doppler frequency shifts observed in echo signals reflected from targets are inherently continuous rather than discrete.
The problem of interest is formulated as an optimization problem with infinite constraints along with unimodular constraints.
To the best of the authors' knowledge, such a problem has not been studied yet.
We propose to reformulate a non-convex semi-infinite programming (SIP) to a semidefinite programming (SDP) with a finite number of constraints and a rank-one constraint, which is then solved by the sequential rank-one constraint relaxation (SROCR) algorithm.
Simulation results demonstrate that the proposed method outperforms existing methods in achieving a lower PSL of AF over a continuous Doppler frequency shift region of interest.
Moreover, the designed waveform can effectively prevent false alarms when detecting a target with an arbitrary velocity.
\end{abstract}
\begin{IEEEkeywords}
Unimodular waveform design, ambiguity function (AF), discrete-time ambiguity function (DTAF), discrete ambiguity function (DAF), peak sidelobe level (PSL), semi-infinite programming (SIP), semidefinite programming (SDP).
\end{IEEEkeywords}
%
\IEEEpeerreviewmaketitle

\section{Introduction} \label{sec:introduction}
\IEEEPARstart{I}{n} active sensing systems, such as radar or sonar, the matched filter responses of the transmit waveforms with various time delays and Doppler frequency shifts can be evaluated using an ambiguity function (AF) \cite{Levanon04, Richards14, He12, Zhang16, Wang22}. 
An AF is considered helpful for evaluating the range-Doppler resolution of transmit waveforms \cite{Levanon04, Richards14, He12, Zhang16, Wang22}. 
In addition, low AF sidelobe levels ensure that weak targets are less likely to be masked by sidelobe levels of strong targets or distributed clutter in the matched filter output \cite{Chen22, Chitre20, Yang18}.
Moreover, waveforms with low AF sidelobe levels help to reduce the likelihood of false alarms \cite{Chen22}.

Hence, a thumbtack-like AF is desirable for its narrow central peak and near-zero levels elsewhere on the time delay and Doppler frequency shift plane (delay-Doppler plane) \cite{Levanon04, Richards14, He12, Zhang16, Wang22}.
However, suppressing AF sidelobe levels across the entire delay-Doppler plane is challenging due to the volume invariance property of AF \cite{Levanon04, Richards14, He12, Chen22, Zhang16, Wang22, Cui17, Jing19}. 
Thus, considering AF optimization within a region of interest (ROI) on the delay-Doppler plane has received attention in \cite{He12, Arlery16, Cui17, Yang18, Chitre20, Chen22, Li21, Zhang16, Jing19}.
Furthermore, waveforms with unimodular properties are preferred to keep the power amplifier (PA) in a saturated state \cite{Chen22, Yu20} due to hardware limitations such as PA efficiency requirements \cite{He12, Levanon04, Yu20}. 
To this end, unimodular waveform designs considering AF optimization within ROI on the delay-Doppler plane have been studied in \cite{He12, Arlery16, Cui17, Yang18, Chitre20, Chen22, Li21, Zhang16, Jing19}. 

In general, metrics for optimizing AF include the weighted integrated sidelobe level (WISL) \cite{He12, Arlery16, Cui17, Yang18} and the peak sidelobe level (PSL) \cite{Jing19, Chitre20, Chen22, Wu17}. 
Different methods were proposed in \cite{He12,Arlery16,Cui17} to minimize WISL of AF under unimodular constraints.
In \cite{He12}, the cyclic algorithm was utilized.
In \cite{Arlery16}, an efficient gradient method was proposed.
In \cite{Cui17}, the accelerated iterative sequential optimization (AISO) algorithm was proposed.
Additionally, in \cite{Yang18}, the AF WISL minimization problem under spectral nulling and peak-to-average ratio (PAR) constraints was tackled by an iterative sequential quartic optimization (ISQO) algorithm.
Moreover, in \cite{Zhang16}, unimodular waveforms were designed by minimizing squared errors between the desired AF template and the realizable AF.
However, tackling the WISL minimization problems in \cite{He12, Arlery16, Cui17, Yang18} or the problem in \cite{Zhang16} does not guarantee waveforms with a low PSL, even if the resulting waveform exhibits a low WISL.
Waveforms with a low PSL are of greater practical interest, as a lower AF PSL typically leads to a lower false alarm probability \cite{Sankuru21, Huang23}. 

On the other hand, unimodular waveform designs related to AF optimization problems with PSL minimization have been studied in \cite{Jing19, Chitre20, Chen22, Wu17}. 
In \cite{Jing19}, a hybrid Lagrange programming neural network (LPNN) approach was proposed to address the AF PSL minimization problem.
In \cite{Wu17}, the waveform, subject to PAR constraints, was designed to maximize the signal-to-interference-plus-noise ratio (SINR), which was equivalent to an AF optimization problem with PSL minimization and was solved using a majorization-minimization (MM) framework.
In addition, the narrowband waveform design was extended to the broadband case with AF PSL minimization in \cite{Chitre20}, where a semidefinite programming (SDP) with a rank-one constraint was tackled.
Moreover, the set of unimodular sequences was jointly designed under the scheme of mismatched filter banks considering the PSL minimization in \cite{Chen22}, and a generalized maximum block improvement (GMBI) method was proposed.

However, AF optimization problems were only considered for grid points within the ROI of the delay-Doppler plane in \cite{He12, Arlery16, Cui17, Yang18, Chitre20, Chen22, Li21, Zhang16, Jing19}, and PSL suppression was not guaranteed along a continuous Doppler frequency shift ROI. 
We suppose that AF optimization problems with PSL suppression are better considered in a continuous Doppler frequency shift ROI for the following two reasons.
Firstly, target speeds are typically not discrete values; therefore, the Doppler frequency shifts observed in echo signals reflected from targets are generally continuous rather than discrete.
Secondly, if only discrete Doppler frequency shift bins are considered, PSL suppression of AF is not guaranteed over a continuous Doppler frequency shift ROI, and sidelobes at Doppler frequencies not included in the bins of interest may be overlooked, potentially leading to false alarms.
Specifically, previous studies in \cite{He12, Arlery16, Cui17, Yang18, Chitre20, Chen22, Li21, Zhang16, Jing19} have designed unimodular waveforms based on the discrete ambiguity function (DAF) which is defined over grid points in the delay-Doppler plane.
It is worth noting that the discrete-time ambiguity function (DTAF), which accounts for a continuous Doppler frequency shift region, was defined in \cite{Zhang16}. However, in \cite{Zhang16}, unimodular waveforms were not designed directly based on the DTAF; instead, they were obtained by solving a DAF optimization problem and considered as an ``approximation" of the DTAF optimization problem.
Although a more precise ``approximation" can be achieved by discretizing the Doppler frequency shift region of interest into sufficient small intervals, the computational burden increases substantially. 
Therefore, we consider it essential to design unimodular waveforms directly based on the DTAF optimization, rather than using ``approximation" approaches as in \cite{He12, Arlery16, Cui17, Yang18, Chitre20, Chen22, Li21, Zhang16, Jing19}, particularly given the inherently continuous nature of the Doppler frequency shift.

We propose to formulate the DTAF optimization problem as a semi-infinite programming (SIP), as it includes constraints along the continuous Doppler frequency shift ROI.
To address an SIP, theorems of describing the trigonometric polynomial that is non-negative over a given segment as linear matrix inequalities (LMIs) were developed in \cite{Dumitrescu2006, Davidson2002, Dumitrescu2017},
and have been applied to several research fields, including filter designs \cite{Davidson2002, Dumitrescu2006, Dumitrescu2017}, beamformer designs \cite{Yiu03, Yu10}, grid-free compressive beamforming \cite{Xenaki15, Yang17}, and spectral estimation \cite{Dumitrescu2017}.
We utilize a theorem from \cite{Dumitrescu2006, Dumitrescu2017} to address the DTAF optimization problem with infinite constraints in the continuous Doppler frequency shift ROI. 
To the best of the authors' knowledge, considering the DTAF optimization as an SIP with PSL minimization under unimodular constraints has not yet been tackled in the literature.

In this paper, unimodular waveforms are designed by solving the DTAF optimization problem that minimizes PSL over a continuous Doppler frequency shift ROI.  
We formulate the DTAF optimization problem as an SIP that involves infinite constraints in a continuous Doppler frequency shift region. 
The problem to be tackled is non-convex owing to unimodular constraints.
We propose to transfer the non-convex SIP into an equivalent SDP with a rank-one constraint, and the resulting problem is solved using the sequential rank-one constraint relaxation (SROCR) algorithm \cite{Cao17}. 
In addition, the normalized true peak sidelobe level (NTPSL) criterion is proposed to evaluate the PSL of the DTAF across the continuous Doppler frequency shift ROI.
The simulation results demonstrate that the proposed method outperforms existing methods by a considerable margin in terms of the NTPSL criterion.
Furthermore, the waveforms designed by the proposed algorithm effectively prevent false alarms when detecting a target moving at arbitrary speeds.

The remainder of the paper is organized as follows.
In Section \ref{sec:problem_formulation}, the DTAF optimization problem with PSL minimization under unimodular constraints is formulated. 
In Section \ref{sec:proposed_algorithm}, the proposed method is introduced.
In Section \ref{sec:numerical_results}, simulation results reveal the advantages of the proposed method. 
Lastly, the conclusions are presented in Section \ref{sec:conclusion} and suggestions are provided for future research.  

\emph{Notations}:
The upper case letters in bold represent matrices, the lower case letters in bold represent column vectors, and the italic letters represent scalars, such as $\bf X$, $\bf x$, and $x$. 
The $m$-th entry of $\bf x$ is denoted by $x_m$.
The $M$-dimensional complex vector space is defined as $\mathbb{C}^{M}$.
The set of all real numbers is denoted by $\mathbb{R}$.
Zero-based indexing is applied, and $\mathbb{Z}_M$ stands for the set \mbox{$\{0,1,...,M-1\}$}. 
The set of all \mbox{$M\times M$} positive semidefinite matrices is denoted as ${\mathbb H}^M_{+}$. 
The Cartesian product of two sets $S_1$ and $S_2$ is defined as \mbox{$S_1\times S_2 = \{(x_1,x_2) \mid x_1\in S_1, x_2 \in S_2 \}$}.
The operators \mbox{$|\cdot|$}, \mbox{$(\cdot)^*$}, \mbox{$(\cdot)^T$}, and \mbox{$(\cdot)^H$} denote the modulus, the complex conjugate, the transpose, and the conjugate transpose, respectively.
The functions $\text{Tr}( {\bf X} )$ and $\text{rank}({\bf X})$ represent the trace and the rank of the matrix ${\bf X}$, respectively.
The zero matrix, the identity matrix, and the upper shift matrix of size $M\times M$ are defined, respectively, as ${\bf O}_M$, ${\bf I}_M$, 

\vspace{-3mm}
\begin{small}
\begin{equation}  \label{def:upper_shift_matrix}
\begin{aligned}
    {\bf N}_M = \left[
        \begin{matrix}
        0      & 1      & 0      & \cdots & 0      \\
        0      & 0      & 1      & \cdots & 0      \\
        \vdots & \vdots & \ddots & \ddots & \vdots \\
        \vdots & \vdots &        & \ddots & 1      \\
        0      & 0      & \cdots & \cdots & 0      \\
        \end{matrix}
        \right],
\end{aligned}
\ \text{ and } \left({\bf N}_M\right)^0 = {\bf I}_M.
\end{equation} 
\end{small}

\vspace{-2mm}
\noindent
The elementary Toeplitz matrix is defined as
\begin{equation} \label{def:Elementary_Toeplitz_matrix}
\begin{aligned}
    {\bf \Theta}_{M}^{(m)} = \left\{
        \begin{aligned}
        &\left({\bf N}_M\right)^m,   && \mbox{$m \in \{0,1,...,M-1\};$}\\
        &\left({\bf N}^T_M\right)^{|m|}, && \mbox{$m \in \{-1,-2...,-M+1\};$}\\
        &\hspace{1mm}{\bf O}_M, && \mbox{$|m|\geq M$.}
        \end{aligned}
    \right. 
\end{aligned}    
\end{equation} 
The Kronecker delta function is defined as
\begin{equation} \label{def:delta}
    {\delta}_m = 
    \left\{
    \begin{aligned}
    1, && &\mbox{if $m=0;$}\\
    0, && &\mbox{else.}\\
    \end{aligned}
    \right. 
\end{equation}  
\section{DTAF Optimization Problem with PSL Minimization under unimodular constraints}  \label{sec:problem_formulation}
The ambiguity function (AF) is considered helpful in evaluating the response of the matched filter to a signal with various time delays and Doppler frequency shifts \cite{Levanon04, Richards14, He12, Zhang16, Wang22}.
In \cite{He12, Zhang16,Jing19, Cui17, Yang18, Chen22, Wu17}, various modified definitions of the AF have been proposed based on different considerations. 
In this section, we first define the continuous-time ambiguity function (CTAF) \cite{He12}, the discrete-time ambiguity function (DTAF) \cite{Zhang16}, and the discrete ambiguity function (DAF) \cite{He12, Jing19, Cui17, Yang18, Chen22, Wu17}.
Subsequently, the DTAF optimization problem with PSL minimization under unimodular constraints will be formulated.

\subsection{Definitions of the CTAF, DTAF, and DAF}
We begin by considering the continuous signal 
\begin{equation} \label{eq:x(t)}
    x(t)=\sum_{n=0}^{N-1}x_np(t-nT_s),
\end{equation} 
where $x_n$, $\forall n \in \mathbb{Z}_N$, is the code sequence of length $N$, $p(t)$ is the pulse shaping filter, and $T_s$ is the sampling interval.
The AF of the continuous signal $x(t)$ 
is defined as \cite{He12}
\begin{equation} \label{def:CTAF}
\begin{aligned}
    A_c(\tau,f) = \int_{-\infty}^{\infty}x(t)x^*(t-\tau)e^{-j2\pi f(t-\tau)}dt,
\end{aligned}
\end{equation} 
where $\tau$ is the time delay, and $f$ is the Doppler frequency shift.
The AF defined in (\ref{def:CTAF}) is known as the CTAF \cite{Zhang16}.
By substituting (\ref{eq:x(t)}) into (\ref{def:CTAF}), after some derivations in Appendix \ref{appendix:CTAF_DTAF_deriv}, we can obtain \cite{Zhang16} 
\begin{equation} \label{eq:CTAF_DTAF}
\begin{aligned}
    \hspace{-1mm}
    A_c(\tau,f) 
    &= \sum_{l=-N+1}^{N-1}A(l,f)\chi(\tau-lT_s,f),
\end{aligned}
\end{equation}
where
\begin{equation} \label{def:A(l,f)}
\begin{aligned}
    A(l,f)= \sum_{n={\max}(0,l)}^{{\min}(N-1,N-1+l)}x_nx_{n-l}^*e^{-j2\pi fT_s(n-l)},
\end{aligned}
\end{equation}
and
\begin{equation} \label{def:chi}
\begin{aligned}
    \chi(\tau,f)= \int_{-\infty}^{\infty}p(s)p(s-\tau)e^{-j2\pi f(s-\tau)}ds.
\end{aligned}
\end{equation}
To define the DTAF, we further denote the normalized Doppler frequency shift as 
\begin{equation} \label{def:f_D}
    f_D = fT_s. 
\end{equation}
Then, by substituting \eqref{def:f_D} into \eqref{def:A(l,f)},
the DTAF is defined as \cite{Zhang16}
\begin{equation} \label{def:DTAF}
\begin{aligned}
    A(l ,f_D) = \sum_{n={\max}(0,l)}^{{\min}(N-1,N-1+l)}x_nx_{n-l}^*e^{-j2\pi f_D(n-l)},
\end{aligned}
\end{equation}
where $l\in\{0,\pm 1,\pm 2,..., \pm (N-1)\}$ and $f_D \in \left[-\tfrac{1}{2}, \tfrac{1}{2}\right]$. 
The DTAF is discrete along the time delay axis and continuous along the normalized Doppler frequency shift axis.
Moreover, from \eqref{eq:CTAF_DTAF}, it is worth noting that the shape of CTAF $A_c(\tau,f)$ can be controlled by DTAF $A(l,f_D)$ \cite{Zhang16}.
Furthermore, by sampling the normalized Doppler frequency shift region $[\tfrac{-1}{2},\tfrac{1}{2}]$ in \eqref{def:DTAF} with $2K+1$ points, $K$ is an integer, and further considering the normalized Doppler frequency shift bins 
$f_D\in \{ 0, \tfrac{\pm 1}{M}, \tfrac{\pm 2}{M},..., \tfrac{\pm K}{M} \}$, where $\tfrac{K}{M}\leq \tfrac{1}{2}$.
Then, the DAF can be defined as \cite{He12, Jing19, Cui17, Yang18, Chen22, Wu17} 
\begin{equation} \label{def:DAF}
    A_d(l ,k; M) = \sum \limits_{n={\max}(0,l)}^{{\min}(N-1,N-1+k)}x_nx_{n-l}^*e^{-j2\pi \frac{(n-l)k}{M}},
\end{equation}
where \mbox{$l\in\{0,\pm 1,..., \pm (N-1)\}$}, and \mbox{$k\in\{0,\pm 1,...,\pm K\}$}.
The DAF can be interpreted as an approximation of the DTAF \cite{Jing19}, where the accuracy of the approximation depends on the value of $M$ that determines the resolution along the normalized Doppler frequency shift axis.

\subsection{Problem formulation}
In active sensing systems, transmit waveforms with low AF sidelobe levels are desirable to prevent weak targets from being masked by the sidelobe levels of strong targets or distributed clutter in the matched filter output \cite{Chen22, Chitre20, Yang18}.
In addition, waveforms with low AF sidelobe levels help reduce the likelihood of false alarms.
Moreover, waveforms with unimodular properties are preferred to keep the PA of the transmitter in a saturated state \cite{Chen22, Yu20}, thereby meeting the PA efficiency requirements \cite{He12, Levanon04, Yu20}. 
As a result, extensive research has focused on designing unimodular waveforms with low AF sidelobe levels using criteria of the WISL \cite{He12, Arlery16, Cui17, Yang18} or the PSL \cite{Jing19, Chitre20, Chen22, Wu17}. 
However, all these works in \cite{He12, Arlery16, Cui17, Yang18, Jing19, Chitre20, Chen22, Wu17} designed unimodular waveforms based on the DAF defined in \eqref{def:DAF}, with the aim of minimizing the WISL or PSL at the grid points on the delay-Doppler plane.
We suppose that designing unimodular waveforms considering a continuous Doppler frequency shift region of interest (ROI) is more practical, as Doppler frequency shifts observed in echo signals reflected from targets are inherently continuous rather than discrete.
In other words, unimodular waveform designs based on the DTAF optimization problem are more desirable than using the DAF optimization problem.
In fact, the computational burden increases significantly as $M$ grows in an attempt to achieve a better ``approximation" of the DAF defined in \eqref{def:DAF} to the DTAF defined in \eqref{def:DTAF}.
In addition, achieving \mbox{$M \rightarrow \infty$} is impossible.
For the above reasons, we aim to design unimodular waveforms by directly tackling the DTAF optimization with PSL minimization, without resorting to the approximation using the DAF as in previous approaches \cite{He12, Arlery16, Cui17, Yang18, Jing19, Chitre20, Chen22, Wu17}.

When sidelobe levels of the DTAF are minimized within a specific region of the delay-Doppler plane, it must spread elsewhere due to the volume invariant property \cite{Zhang16} (CTAF and DAF also have this property \cite{Levanon04, Richards14, He12, Zhang16}).
As a result, we focus on the ROI within the delay-Doppler plane to suppress the PSL of the DTAF, where the ROI is defined as
\begin{equation}  \label{eq:Gamma}
    \Gamma = \left\{(l,f_D) \mid l \in \{0,\pm 1,\pm 2,..., \pm L\}, f_D \in [-f_R,f_R] \right\}
\end{equation} 
with $l$ denoting the discrete time delay bin, $L\leq N$, $N$ is the length of sequence, and $f_D$ denoting the normalized Doppler frequency shift, as defined in \eqref{def:f_D}, which lies within a continuous region \mbox{$[-f_R,f_R]\subseteq [\frac{-1}{2},\frac{1}{2}]$}.
Additionally, the main lobe of DTAF is supposed to appear at \mbox{$(l,f)=(0,0)$}.
Moreover, the zero-delay cut of the DTAF for the unimodular sequence can be approximated as \cite{Zhang16} \mbox{$|A(0, f_D)| \approx N |{\rm sinc}(\pi f_D N)|$},
where \mbox{${\rm sinc}(x)=\sin(\pi x)/\pi x$}.
Thus, the zero-delay cut of the DTAF is approximated to a sinc function which is independent of the sequence and cannot be optimized.
According to the above considerations, the sidelobe region of the DTAF is defined as
\begin{align} \label{eq:Gamma_s}
    \Gamma_s 
    &= \left\{(l,f_D) \in \Gamma \mid l \neq 0 \right\} \nonumber \\
    &= \{\pm 1,\pm 2,..., \pm L\} \times [-f_R,f_R],
\end{align}
where $\Gamma$ is defined in (\ref{eq:Gamma}).
The PSL of the DTAF is then defined as
\begin{equation} \label{def:PSL}
    \max_{(l,f_D) \in \Gamma_s}\left\{\left|A(l,f_D)\right|\right\}.
\end{equation}
The DTAF optimization problem with PSL minimization under unimodular constraints over a continuous Doppler frequency shift ROI is formulated as follows 
\begin{subequations} \label{P0}
\begin{align} 
    &\underset{\scriptstyle {\bf x} \in {\mathbb C}^N}{\mathrm{minimize}} \quad \max_{(l,f_D) \in \Gamma_s}\left|A(l,f_D)\right| \label{P0_a} \\
    &\mathrm{subject\ to} \hspace*{5mm} |x_n| = 1, \forall n \in \mathbb{Z}_N, \label{P0_b}
\end{align}
\end{subequations}
where \mbox{${\bf x} = [x_0, \ldots, x_{N-1}]^T$} is the sequence to be designed, $A(l,f_D)$ is the DTAF defined in \eqref{def:DTAF}, and $\Gamma_s$ is the sidelobe region defined in \eqref{eq:Gamma_s}.
The optimization problem \eqref{P0} is a semi-infinite programming (SIP) with infinite constraints on the normalized Doppler frequency shift ROI for each time delay bin.
Furthermore, the unimodular constraints in (\ref{P0_b}) are non-convex, making the problem even more challenging to be solved.

\section{Proposed method} \label{sec:proposed_algorithm}
In this section, the proposed method is introduced to tackle the DTAF optimization problem with PSL minimization under unimodular constraints formulated in \eqref{P0}, which is an SIP with non-convex constraints. 
We propose the following approach to reformulate the non-convex SIP with infinite constraints into a semidefinite programming (SDP) with a finite number of constraints and a rank-one constraint.   

To facilitate the following derivations,
the equivalent problem of (\ref{P0}) using an epigraph form representation is shown as follows.
\begin{subequations} \label{P2}
\begin{align} 
    &\underset{\scriptstyle t \in \mathbb{R}, \ {\bf x} \in {\mathbb C}^N}{\mathrm{minimize}} \quad t\\
    &\mathrm{subject\ to} 
    \hspace*{3mm} \left|A(l,f_D)\right|^2 \leq t, \forall (l,f_D) \in \Gamma_s \label{P2_b} \\
    &\hspace*{19mm} |x_n| = 1, \forall n \in \mathbb{Z}_N.
\end{align}
\end{subequations}

Due to the symmetry property of DTAF, $\left|A(l,f_D)\right| = \left|A(-l,-f_D)\right|$ (CTAF and DAF also has this property \cite{Levanon04,Richards14,He12}), 
it is sufficient to study positive time delay bins of DTAF as in the optimization problem \eqref{P2_2}. 
The negative time delay values of DTAF can be deduced from the symmetry property.
\begin{subequations} \label{P2_2}
\begin{align} 
    &\underset{\scriptstyle t \in \mathbb{R}, \ {\bf x} \in {\mathbb C}^N}{\mathrm{minimize}} \quad t\\
    &\mathrm{subject\ to} 
    \hspace*{0.1mm} \left|A(l,f_D)\right|^2 \leq t, \forall l \in \{1,...,L\}, \forall f_D \in [-f_R,f_R] \label{P2_2_b} \\ 
    &\hspace*{16.5mm} |x_n| = 1, \forall n \in \mathbb{Z}_N,
\end{align}
\end{subequations}
where a continuous normalized Doppler frequency shift ROI \mbox{$f_D \in [-f_R,f_R]$} is considered.

The optimization problem (\ref{P2_2}) is an SIP with infinite constraints (\ref{P2_2_b}) over a continuous normalized Doppler frequency shift ROI for each time delay bin. In the following theorem, we will show that the optimization problem (\ref{P2_2}) can be reformulated as an SDP with a finite number of constraints.
More specifically, let constants $d_0$ and $d_1$ chosen as
\begin{align}
    d_0 &= \tfrac{\tan (\pi f_R)\tan (\pi f_R)-1}{2}, \label{eq:d0} \\
    d_1 &= \tfrac{1+ \tan^2 (\pi f_R)}{4} + j\tfrac{\tan (\pi f_R)}{2}, \label{eq:d1}
\end{align}
and define matrices ${\bf{\Phi}}_{N-1}^{(n)}, n\in\mathbb{Z}_N$, as
\begin{align}
    {\bf{\Phi}}_{N-1}^{(n)} &= d_0 \boldsymbol{\Theta}_{N-1}^{(n)} + d_1^* \boldsymbol{\Theta}_{N-1}^{(n+1)} 
    + d_1 \boldsymbol{\Theta}_{N-1}^{(n-1)}, \label{def:Phi} 
\end{align}
where $\boldsymbol\Theta_{N-1}^{(n)}, \boldsymbol\Theta_{N-1}^{(n+1)}$, and $\boldsymbol\Theta_{N-1}^{(n-1)}$ are elementary Toeplitz matrices as defined in (\ref{def:Elementary_Toeplitz_matrix}). Then, the optimization problem (\ref{P2_2}) is equivalent to the following SDP:
\begin{subequations} \label{P_LMI_0}
\begin{align} 
    &\underset{\substack{t \in \mathbb{R}, \  {\bf x} \in {\mathbb C}^N \\{\bf h}^{(1)}, \ldots, {\bf h}^{(L)} \in {\mathbb C}^N ,\\{\bf Q}^{(1)}, \ldots, {\bf Q}^{(L)} \in {\mathbb H}_+^{N}, \\{\bf P}^{(1)}, \ldots, {\bf P}^{(L)} \in {\mathbb H}_+^{N-1}}}{\mathrm{minimize}} \hspace*{2mm} t \label{P_LMI_0_a}\\
    & \hspace*{-1mm} \mathrm{subject\ to} 
    \hspace*{2mm} t \delta_n = \mathrm{Tr}\left( \boldsymbol{\Theta}_{N}^{(n)}{\bf Q}^{(l)} \right) 
    + \mathrm{Tr}\Big({\bf \Phi}_{N-1}^{(n)} {\bf P}^{(l)} \Big), \nonumber \\ 
    & \hspace*{39mm} \forall n \in \mathbb{Z}_N,\forall l \in \{1,...,L\} \label{P_LMI_0_b} \\
    &\hspace*{15mm} 
    \renewcommand{\arraystretch}{0.8}
    \left[\begin{array}{cc}
        \mathbf{Q}^{(l)} & \mathbf{h}^{(l)} \\
        \mathbf{h}^{(l)H} & 1
    \end{array}\right] \succeq {\bf O}_{N+1}, \forall l \in \{1,...,L\} \label{P_LMI_0_c} \\ 
    &\hspace*{16mm} h_n^{(l)} = x_nx_{n-l}^*, \forall n \in \mathbb{Z}_N, \forall l \in \{1,...,L\} \label{P_LMI_0_d}\\
    &\hspace*{16mm} |x_n| = 1, \forall n \in \mathbb{Z}_N, \label{P_LMI_0_e}
\end{align}
\end{subequations}
where 
$\delta_n$ is defined in \eqref{def:delta},
$\boldsymbol\Theta_{N}^{(n)}$ is defined in \eqref{def:Elementary_Toeplitz_matrix},
and $h_n^{(l)}$ is the $n$th component of ${\bf h}^{(l)}$.
\begin{theorem} \label{thm:SIP_LMI}      
    The optimization problem \eqref{P2_2}, an SIP with infinite constraints, is equivalent to the optimization problem \eqref{P_LMI_0}, an SDP with a finite number of constraints.
\end{theorem}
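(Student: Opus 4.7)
The plan is to view the constraint $|A(l,f_D)|^2 \leq t$ for all $f_D\in[-f_R,f_R]$ as the non-negativity of a trigonometric polynomial on an arc of the unit circle, and then apply the Gram-matrix (positivstellensatz-type) parametrization of Dumitrescu to convert this infinite family of inequalities into a single pair of LMIs per delay bin $l$. The auxiliary matrices $\mathbf{Q}^{(l)}, \mathbf{P}^{(l)}$ in (\ref{P_LMI_0}) are the Gram matrices produced by this parametrization, while the block-matrix LMI (\ref{P_LMI_0_c}) encodes the link between $\mathbf{Q}^{(l)}$ and the rank-one piece $\mathbf{h}^{(l)}\mathbf{h}^{(l)H}$ coming from expanding $|A(l,f_D)|^2$.

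First I would introduce the substitution $h_n^{(l)} = x_n x_{n-l}^*$ so that, for each fixed $l$, $A(l,f_D) = \sum_{n=l}^{N-1} h_n^{(l)} e^{-j2\pi f_D(n-l)}$ becomes a trigonometric polynomial affine in $\mathbf{h}^{(l)}$. Consequently $R^{(l)}(f_D):=t-|A(l,f_D)|^2$ is a real-valued trigonometric polynomial of degree at most $N-1$ whose $n$-th Fourier coefficient is $t\delta_n - \mathrm{Tr}(\boldsymbol{\Theta}_{N}^{(n)}\mathbf{h}^{(l)}\mathbf{h}^{(l)H})$, and (\ref{P2_2_b}) is then equivalent to requiring $R^{(l)}(f_D)\geq 0$ for all $f_D\in[-f_R,f_R]$ and all $l\in\{1,\dots,L\}$.

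Next I would invoke the Dumitrescu theorem cited in Section~\ref{sec:introduction}: a real trigonometric polynomial of degree at most $N-1$ is non-negative on the arc $[-f_R,f_R]$ if and only if its coefficients $r_n$ admit a Gram-matrix decomposition
\begin{equation*}
    r_n = \mathrm{Tr}\!\left(\boldsymbol{\Theta}_{N}^{(n)}\tilde{\mathbf{Q}}\right) + \mathrm{Tr}\!\left(\mathbf{D}^{(n)}\mathbf{P}\right),\qquad \tilde{\mathbf{Q}}\in\mathbb{H}_+^{N},\ \mathbf{P}\in\mathbb{H}_+^{N-1},
\end{equation*}
where $\mathbf{D}^{(n)}$ is built from the coefficients of the weight polynomial associated with the arc. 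A direct calculation, using the identity $1+\tan^2=\sec^2$, shows that with $d_0,d_1$ from (\ref{eq:d0})--(\ref{eq:d1}) the weight $d(f_D)=d_0+d_1^*e^{j2\pi f_D}+d_1e^{-j2\pi f_D}$ coincides (up to a positive factor) with Dumitrescu's canonical arc weight for $[-f_R,f_R]$, so that $\mathbf{D}^{(n)}=\boldsymbol{\Phi}_{N-1}^{(n)}$ as defined in (\ref{def:Phi}). Substituting $r_n=t\delta_n - \mathrm{Tr}(\boldsymbol{\Theta}_{N}^{(n)}\mathbf{h}^{(l)}\mathbf{h}^{(l)H})$ and absorbing the rank-one term by setting $\mathbf{Q}^{(l)}:=\tilde{\mathbf{Q}}+\mathbf{h}^{(l)}\mathbf{h}^{(l)H}$ produces (\ref{P_LMI_0_b}), while the requirement $\tilde{\mathbf{Q}}\succeq 0$, i.e.\ $\mathbf{Q}^{(l)}\succeq\mathbf{h}^{(l)}\mathbf{h}^{(l)H}$, is equivalent by the Schur complement lemma to the block LMI (\ref{P_LMI_0_c}).

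The two directions of the equivalence then assemble straightforwardly: starting from a feasible point of (\ref{P2_2}), apply the Dumitrescu decomposition for each $l$ to produce $(\tilde{\mathbf{Q}}^{(l)},\mathbf{P}^{(l)})$, and set $\mathbf{h}^{(l)}$ and $\mathbf{Q}^{(l)}$ as above so that (\ref{P_LMI_0_b})--(\ref{P_LMI_0_e}) hold with the same $t$; conversely, from a feasible point of (\ref{P_LMI_0}), the Schur complement applied to (\ref{P_LMI_0_c}) gives $\mathbf{Q}^{(l)}-\mathbf{h}^{(l)}\mathbf{h}^{(l)H}\succeq 0$, which together with (\ref{P_LMI_0_b}) is exactly the Gram-matrix certificate that $R^{(l)}\geq 0$ on $[-f_R,f_R]$. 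The main obstacle is the algebraic verification that the specific matrices $\boldsymbol{\Phi}_{N-1}^{(n)}$ in (\ref{def:Phi}) indeed implement the arc-weight Gram matrices for $[-f_R,f_R]$---that is, matching $(d_0,d_1)$ with Dumitrescu's canonical weight---after which the remainder is bookkeeping with the Schur complement and the indexing of Fourier coefficients $n\in\mathbb{Z}_N$.
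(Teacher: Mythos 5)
Your proposal is correct and follows essentially the same route as the paper: for each delay bin it invokes Dumitrescu's trace parametrization of trigonometric polynomials nonnegative on the arc $[-f_R,f_R]$ (the paper's Lemma \ref{Thm:nonnegative_trace_parametrization}) to turn the infinite constraints (\ref{P2_2_b}) into (\ref{P_LMI_0_b})--(\ref{P_LMI_0_c}) via the Schur complement. The only organizational difference is that you apply the parametrization once to $t-|A(l,f_D)|^2$ and absorb the rank-one piece $\mathbf{h}^{(l)}\mathbf{h}^{(l)H}$ into $\mathbf{Q}^{(l)}$, whereas the paper packages the same step as a standalone bounded-real-type lemma (Lemma \ref{lemma:BRL}) obtained by summing Gram certificates for $|H^{(l)}|^2$ and $t-|H^{(l)}|^2$.
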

\begin{proof}
    The proof of Theorem \ref{thm:SIP_LMI} is presented in Appendix \ref{appendix:proof_thm_SIP_LMI}.  
\end{proof}

By Theorem \ref{thm:SIP_LMI}, the infinite constraints in \eqref{P2_2_b} are equivalent to a finite number of LMIs in \eqref{P_LMI_0_b} and \eqref{P_LMI_0_c}.
In addition, \eqref{P_LMI_0_b} is an affine combination of positive semidefinite (PSD) matrices. 
Moreover, \eqref{P_LMI_0_c} can be expressed as \mbox{$\mathbf{Q}^{(l)} \succeq {\bf h}^{(l)}{\bf h}^{(l)H}$} by the Schur complement.

However, the optimization problem \eqref{P_LMI_0} is still non-convex due to the unimodular constraints in (\ref{P_LMI_0_e}).
To tackle the non-convex constraints, we intend to reformulate the optimization problem \eqref{P_LMI_0} into an SDP with a rank-one constraint. 
We introduce a new variable 
\begin{align} \label{def:X}
    {\bf X} = {\bf x}{\bf x}^H \in \mathbb{H}_+^N,
\end{align}
and notice that \eqref{def:X} is equivalent to ${\bf X}$ being a rank-one PSD matrix.
Then, from (\ref{P_LMI_0_d}), we can have
\begin{align} \label{bf_h}
    {\bf h}^{(l)} 
    = \mathrm{diag}\left( \boldsymbol{\Theta}_{N}^{(l)}{\bf X}
\right), 
\end{align}
where $\boldsymbol{\Theta}_{N}^{(l)}$ is defined in (\ref{def:Elementary_Toeplitz_matrix}).
Moreover, the unimodular constraints in (\ref{P_LMI_0_e}) can be expressed as:
\begin{align} \label{Tr(EX)}
    |x_n|^2
    &= (({\bf e}^{(n)})^T{\bf x})(({\bf e}^{(n)})^T{\bf x})^H
    = ({\bf e}^{(n)})^T{\bf x}{\bf x}^H{\bf e}^{(n)} \nonumber \\
    &= \mathrm{Tr}\left({\bf e}^{(n)}({\bf e}^{(n)})^T{\bf X}\right) 
    = \mathrm{Tr}({\bf E}^{(n)}{\bf X})
    = 1,
\end{align} 
where ${\bf e}^{(n)}$ is the $n$-th $N$-dimensional standard unit vector defined as having $1$ at position $n$ and $0$ elsewhere,
and \mbox{${\bf E}^{(n)} = {\bf e}^{(n)}({\bf e}^{(n)})^T$}.
Substituting \eqref{P_LMI_0_d} with \eqref{bf_h}, \eqref{P_LMI_0_e} with \eqref{Tr(EX)}, and including the rank-one constraint, the equivalent problem of the optimization problem (\ref{P_LMI_0}) is given by the following.
\begin{subequations} \label{P_LMI_1}
    \begin{align} 
        &\underset{\substack{t \in \mathbb{R}, \ {\bf X} \in {\mathbb H}_+^N \\{\bf h}^{(1)}, \ldots, {\bf h}^{(L)} \in {\mathbb C}^N ,\\{\bf Q}^{(1)}, \ldots, {\bf Q}^{(L)} \in {\mathbb H}_+^{N}, \\{\bf P}^{(1)}, \ldots, {\bf P}^{(L)} \in {\mathbb H}_+^{N-1}}}{\mathrm{minimize}} \hspace*{2mm} t\\
        & \hspace*{-3mm} \mathrm{subject\ to} 
        \hspace*{2mm} t \delta_n = \mathrm{Tr}\left( \boldsymbol{\Theta}_{N}^{(n)}{\bf Q}^{(l)} \right) 
        + \mathrm{Tr}\Big({\bf \Phi}_{N-1}^{(n)} {\bf P}^{(l)} \Big), \nonumber \\ 
        & \hspace*{40mm} \forall n \in \mathbb{Z}_N, \forall l \in \{1,...,L\} \\
        &\hspace*{14mm} 
        \renewcommand{\arraystretch}{0.8}
        \left[\begin{array}{cc}
            \mathbf{Q}^{(l)} & \mathbf{h}^{(l)}\\
            \mathbf{h}^{(l)H} & 1
        \end{array}\right] \succeq {\bf O}_{N+1}, \forall l \in \{1,...,L\} \\ 
        &\hspace*{15mm} {\bf h}^{(l)} =\mathrm{diag}\left( \boldsymbol{\Theta}_{N}^{(l)}{\bf X}
\right) , \forall l \in \{1,...,L\} \\ 
        &\hspace*{15mm} \mathrm{Tr}({\bf E}^{(n)}{\bf X}) = 1, \forall n \in \mathbb{Z}_N \\
        &\hspace*{15mm} \mathrm{rank}({\bf X}) = 1. \label{P_LMI_1_f}
    \end{align}
\end{subequations}

The optimization problem \eqref{P_LMI_1} is an SDP with the rank-one constraint \eqref{P_LMI_1_f}, which is still a non-convex optimization problem.
Algorithms that were proposed to deal with SDPs with rank-one constraints can be roughly categorized into three approaches: 1) semidefinite relaxation (SDR) \cite{Xu2014,Huang2010,Zhang2000,Luo2010,Wang2013,Wiesel2005,Ma2002,Lu2002}; 2) rank minimization \cite{Chitre20,Huang23,Fazel_reweighted_minimization,Dattorro19,Lin2024,Gu_2014_CVPR}; 3) sequential rank-one constraint relaxation (SROCR) \cite{Cao17}.
When applying SDR, the rank-one constraint in the original problem is relaxed.
For a special subclass of problems, the SDR yields an optimal solution that has a rank-one solution \cite{Xu2014,Huang2010,Zhang2000}.
However, in many other cases \cite{Luo2010,Wang2013,Wiesel2005,Ma2002,Lu2002}, SDR does not guarantee a rank-one solution, and the obtained solution is usually suboptimal.
In the second category, a penalty term for rank minimization is added to the objective function, such as using the trace heuristic \cite{Chitre20,Huang23,Fazel_reweighted_minimization,Dattorro19,Lin2024}, the nuclear norm heuristic \cite{Fazel_reweighted_minimization,Gu_2014_CVPR}, and the log-det heuristic \cite{Fazel_reweighted_minimization}.
The challenge of this approach is that its performance depends on the quality of the chosen rank-function approximation and the selection of penalty parameters, which is often intractable \cite{Cao17}.
In the third category, the SROCR algorithm proposed in \cite{Cao17} gradually strengthens the rank-one constraint over sequential iterations.
This approach not only facilitates finding a feasible solution, but also tends to achieve a smaller objective value in minimization problems \cite{Cao17}, as it avoids handling a penalty term for rank minimization in the objective function.
In light of the above advantages, we choose to apply the SROCR algorithm \cite{Cao17} to deal with the optimization problem \eqref{P_LMI_1}.

To handle the rank-one constraint in \eqref{P_LMI_1_f}, we present its reformulation as follows \cite{Cao17}.
Given \mbox{${\bf X}\in{\mathbb{H}}_+^N$},
we denote \mbox{$\lambda_{\rm max}({\bf X})$} as the largest eigenvalue of \mbox{${\bf X}$}.
Since \mbox{${\rm Tr}({\bf X})=\sum_{i=0}^{N-1}\lambda_i$} \cite{Horn1985},
the constraint \mbox{${\rm rank}({\bf X})=1$} is equivalent to \cite{Cao17}
\begin{equation}  \label{eq:rank1_equivalent_1} 
    \lambda_{\rm max}({\bf X})={\rm Tr}({\bf X}).
\end{equation}
Moreover, according to Rayleigh-Ritz theorem \cite{Horn1985}, we have 
\begin{equation} \label{eq:rank1_equivalent_2_1} 
\begin{aligned} 
    \lambda_{\rm max}({\bf X})
    &= \mathop{\rm{max}}\limits_{{\bf v}\in\mathbb{C}^{N}} \frac{{\bf v}^H {\bf X} {\bf v}}{{\bf v}^H{\bf v}}. 
\end{aligned}    
\end{equation}
In addition, denote ${\bf u}_{\rm max}({\bf X})$ as the principal eigenvector corresponding to $\lambda_{\rm max}({\bf X})$. 
Without loss of generality, we assume $\|{\bf u}_{\rm max}({\bf X})\|_2=1$. 
Then, from \eqref{eq:rank1_equivalent_2_1}, we obtain
\begin{equation} \label{eq:rank1_equivalent_2_2}
\begin{aligned}
    \lambda_{\rm max}({\bf X})
    &= {\bf u}_{\rm max}\left( {\bf X} \right)^H {\bf X} {\bf u}_{\rm max}\left( {\bf X} \right).
\end{aligned}    
\end{equation}
With \eqref{eq:rank1_equivalent_1} and \eqref{eq:rank1_equivalent_2_2}, the rank-one constraint in (\ref{P_LMI_1_f}) is equivalent to the constraint (\ref{opt: rank-1 constraint of problem during SROCR}) in the following problem.
\begin{subequations} \label{P_SROCR}
\begin{align} 
    &\underset{\substack{t \in \mathbb{R}, \ {\bf X} \in {\mathbb H}_+^N \\{\bf h}^{(1)}, \ldots, {\bf h}^{(L)} \in {\mathbb C}^N ,\\{\bf Q}^{(1)}, \ldots, {\bf Q}^{(L)} \in {\mathbb H}_+^{N}, \\{\bf P}^{(1)}, \ldots, {\bf P}^{(L)} \in {\mathbb H}_+^{N-1}}}{\mathrm{minimize}} \hspace*{2mm} t\\
    & \hspace*{-3mm} \mathrm{subject\ to} 
    \hspace*{2mm} t \delta_n = \mathrm{Tr}\left( \boldsymbol{\Theta}_{N}^{(n)}{\bf Q}^{(l)} \right) 
    + \mathrm{Tr}\Big({\bf \Phi}_{N-1}^{(n)} {\bf P}^{(l)} \Big), \nonumber \\ 
    & \hspace*{40mm} \forall n \in \mathbb{Z}_N, \forall l \in \{1,...,L\}\\
    &\hspace*{13mm} 
    \renewcommand{\arraystretch}{0.8}
    \left[\begin{array}{cc}
        \mathbf{Q}^{(l)} & \mathbf{h}^{(l)}\\
        \mathbf{h}^{(l)H} & 1
    \end{array}\right] \succeq {\bf O}_{N+1}, \forall l \in \{1,...,L\} \\ \label{opt: delay-multiply constraint of problem during SROCR}
    &\hspace*{14mm} {\bf h}^{(l)} =\mathrm{diag}\left( \boldsymbol{\Theta}_{N}^{(l)}{\bf X}\right), \forall l \in \{1,...,L\}\\ \label{P_SROCR_TrX}
    &\hspace*{14mm} \mathrm{Tr}({\bf E}^{(n)}{\bf X}) = 1, \forall n \in \mathbb{Z}_N \\ 
    &\hspace*{14mm} {\bf u}_{\max}\left({\bf X}\right)^H{\bf X}{\bf u}_{\max}\left({\bf X}\right) = \mathrm{Tr}\left({\bf X}\right). \label{opt: rank-1 constraint of problem during SROCR} 
\end{align}
\end{subequations}
However, the problem is still non-convex since (\ref{opt: rank-1 constraint of problem during SROCR}) implies the rank-one constraint. 
The idea of the SROCR algorithm \cite{Cao17} is to sequentially solve the problem with the partially relaxed constraint in \eqref{ineq:SROCR_constraint} instead of using \eqref{opt: rank-1 constraint of problem during SROCR}:
\begin{equation} \label{ineq:SROCR_constraint}
    {\bf u}_{\max}({\bf X}^{\langle i \rangle})^H{\bf X}{\bf u}_{\max}({\bf X}^{\langle i \rangle}) \geq w^{\langle i \rangle}\mathrm{Tr}({\bf X}),
\end{equation}
where \mbox{${\bf X}^{\langle i \rangle}$} is the feasible solution obtained in the $i$-th iteration, \mbox{$i=0,1,2,...$}, and \mbox{$w^{\langle i \rangle} \in (0,1]$} is the parameter for rank-one constraint approximation. 
Moreover, since the constraint \eqref{P_SROCR_TrX} implies \mbox{$\mathrm{Tr}({\bf X})=N$}, the inequality \eqref{ineq:SROCR_constraint} can be simplified to
\begin{equation} \label{ineq:SROCR_constraint_N}
    {\bf u}_{\max}({\bf X}^{\langle i \rangle})^H{\bf X}{\bf u}_{\max}({\bf X}^{\langle i \rangle}) \geq w^{\langle i \rangle}N.
\end{equation}
We increase \mbox{$w^{\langle i \rangle}$} in each iteration until it approaches one, thereby satisfying \eqref{opt: rank-1 constraint of problem during SROCR}. 
Moreover, it is worth noting that \mbox{$w^{\langle i \rangle}$} satisfies: \cite{Cao17} 
\begin{equation}
    w^{\langle i \rangle} \leq \dfrac{{\bf u}_{\max}({\bf X}^{\langle i \rangle})^H{\bf X}{\bf u}_{\max}({\bf X}^{\langle i \rangle})}{N} \leq \dfrac{\lambda_{\max}({\bf X})}{N},
\end{equation}
where the first $``\leq"$ is from \eqref{ineq:SROCR_constraint_N}, and the second $``\leq"$ is due to \eqref{eq:rank1_equivalent_2_2}. 
Replace (\ref{opt: rank-1 constraint of problem during SROCR}) with (\ref{ineq:SROCR_constraint_N}) and given $\{ {\bf X}^{\langle i \rangle}, w^{\langle i \rangle} \}$, the optimization problem to be solved in the $i+1$-th iteration becomes
\begin{subequations} \label{P:SROCR_iter}
    \begin{align} 
        &\underset{\substack{t \in \mathbb{R}, \ {\bf X} \in {\mathbb H}_+^N\\{\bf h}^{(1)}, \ldots, {\bf h}^{(L)} \in {\mathbb C}^N ,\\{\bf Q}^{(1)}, \ldots, {\bf Q}^{(L)} \in {\mathbb H}_+^{N}, \\{\bf P}^{(1)}, \ldots, {\bf P}^{(L)} \in {\mathbb H}_+^{N-1}}}{\mathrm{minimize}} \hspace*{2mm} t \label{P:SROCR_iter_a}\\
        & \hspace*{-3mm} \mathrm{subject\ to} 
        \hspace*{2mm} t \delta_n = \mathrm{Tr}\left( \boldsymbol{\Theta}_{N}^{(n)}{\bf Q}^{(l)} \right) 
        + \mathrm{Tr}\Big({\bf \Phi}_{N-1}^{(n)} {\bf P}^{(l)} \Big), \nonumber \\ 
        &\hspace*{40mm} \forall n \in \mathbb{Z}_N, \forall l \in \{1,...,L\} \label{P:SROCR_iter_b}\\
        &\hspace*{13mm} 
        \renewcommand{\arraystretch}{0.8}
        \left[\begin{array}{cc}
            \mathbf{Q}^{(l)} & \mathbf{h}^{(l)}\\
            \mathbf{h}^{(l)H} & 1
        \end{array}\right] \succeq {\bf O}_{N+1}, \forall l \in \{1,...,L\} \label{P:SROCR_iter_c} \\ 
        &\hspace*{14mm} {\bf h}^{(l)} =\mathrm{diag}\left( \boldsymbol{\Theta}_{N}^{(l)}{\bf X}
\right), \forall l \in \{1,...,L\} \label{P:SROCR_iter_d} \\ 
        &\hspace*{14mm} \mathrm{Tr}({\bf E}^{(n)}{\bf X}) = 1, \forall n \in \mathbb{Z}_N \label{P:SROCR_iter_e} \\ 
        &\hspace*{14mm} {\bf u}_{\max}\left({\bf X}^{\langle i \rangle}\right)^H{\bf X}{\bf u}_{\max}\left({\bf X}^{\langle i \rangle}\right) \geq w^{\langle i \rangle}N. \label{P:SROCR_iter_f}
    \end{align}
\end{subequations}

The optimization problem (\ref{P:SROCR_iter}) is a convex optimization problem that can be solved through the $\rm{CVX}$ toolbox \cite{CVX}, a package for addressing convex optimization problems. 
The proposed method is summarized in Algorithm \ref{Alg:SROCR-TPSL}.
Firstly, the initial point \mbox{${\bf X}^{\langle 0 \rangle}$} is obtained by solving the optimization problem (\ref{P:SROCR_iter}) without considering \eqref{P:SROCR_iter_f}.
Moreover, we set 
\begin{align} \label{w_0}
    {w}^{\langle 0 \rangle} = \left(1 - \tfrac{ \lambda_{\max}({\bf X}^{\langle 0 \rangle})}{N} \right)/\zeta,
\end{align}
where $\zeta$ is a preset parameter for determining the increment of $w^{\langle i \rangle}$.
If a feasible solution ${\bf X}^*$ can be obtained for the optimization problem (\ref{P:SROCR_iter}) in the \mbox{${i+1}$}-th iteration, we then set \mbox{${\bf X}^{\langle i+1 \rangle} = {\bf X}^*$} and let \mbox{$w^{\langle i+1 \rangle}$} updated as 
\begin{align} \label{w_i}
    w^{\langle i+1 \rangle} = \tfrac{\lambda_{\max}\left({\bf X}^{\langle i+1 \rangle}\right)}{N} + \delta^{\langle i+1 \rangle},
\end{align}
where
\begin{align} \label{delta_i}
    \delta^{\langle i+1 \rangle} 
    = \left(1 - \tfrac{ \lambda_{\max}({\bf X}^{\langle i+1 \rangle})}{N} \right)/\zeta.
\end{align}
If the increment \mbox{$\delta^{\langle i+1 \rangle}$} is too large and causes the optimization problem (\ref{P:SROCR_iter}) to become infeasible in the \mbox{${i+1}$}-th iteration, \mbox{$\delta^{\langle i \rangle}$} will be halved and we retry to solve the optimization problem (\ref{P:SROCR_iter}) again. 
The iteration will stop when \mbox{$w^{\langle i \rangle}$} exceeds $\kappa$, a value smaller than and close to 1, and the objective values expressed in decibels (${\rm dB}$) differ by less than $\epsilon$ between consecutive iterations. 
Lastly, perform the eigendecomposition on \mbox{${\bf X}^{\langle i+1 \rangle}$}, extract its largest eigenvalue and principle eigenvector, and obtain the optimal sequence \mbox{${\bf x}_{\rm opt}$}.

\begin{algorithm}[ht!]
\begin{small}
\caption{Proposed Method for DTAF Optimization Problem with PSL Minimization under Unimodular Constraints}
\label{Alg:SROCR-TPSL}
\textbf{Input: } $N$, $L$, $f_R$, $\zeta$, $\kappa$, $\epsilon$\\ 
\textbf{Output: } ${\bf x}_{\rm opt}$ \\
\begin{algorithmic}[1]
    \vspace{-2mm}
    \State Solve the optimization problem (\ref{P:SROCR_iter}) without considering \eqref{P:SROCR_iter_f} and obtain the initial point \mbox{${\bf X}^{\langle 0 \rangle}$}. 
    \State Set \mbox{$w^{\langle 0 \rangle}$} according to \eqref{w_0}.
    \State \mbox{$i \gets 0$}.
    \Repeat
    \State Given \mbox{$\{ {\bf X}^{\langle i \rangle}, w^{\langle i \rangle} \}$}, solve the optimization problem (\ref{P:SROCR_iter}).
    \If {the optimization problem (\ref{P:SROCR_iter}) is feasible}
    \State Obtain the feasible solution \mbox{${\bf X}^{\langle i+1 \rangle} \gets {\bf X}^*$}.
    \State Obtain the objective value \mbox{$t^{\langle i+1 \rangle} \gets t^*$}.
    \State Set \mbox{$\delta^{\langle i+1 \rangle}$ according to \eqref{delta_i}}.
    \Else
    \State \mbox{${\bf X}^{\langle i+1 \rangle} \gets {\bf X}^{\langle i \rangle}$}.
    \State \mbox{$\delta^{\langle i+1 \rangle} \gets \delta^{\langle i \rangle}/2$}.
    \EndIf
    \State \mbox{$w^{\langle i+1 \rangle} \gets \tfrac{\lambda_{\max}\left({\bf X}^{\langle i+1 \rangle}\right)}{N} + \delta^{\langle i+1 \rangle}$}.
    \State \mbox{$i \gets i+1$}.
    \Until 
    \mbox{$w^{\langle i-1 \rangle} \geq \kappa$} and 
    \mbox{$\left|10\log_{10}\left( \tfrac{t^{\langle i \rangle}}{t^{\langle i-1 \rangle}} \right) \right| \leq \epsilon$}.
    \State Perform eigendecomposition on \mbox{${\bf X}^{{\langle i \rangle}}$} and extract its principal eigenvector, 
    ${\bf u}_{\max}({\bf X}^{{\langle i \rangle}})$,
    along with the largest eigenvalue, 
    ${\lambda}_{\max}({\bf X}^{{\langle i \rangle}})$. 
    \State Obtain \mbox{${\bf x}_{\rm opt} = \sqrt{{\lambda}_{\max}({\bf X}^{{\langle i \rangle}})}{\bf u}_{\max}({\bf X}^{{\langle i \rangle}})$}.  
\end{algorithmic}
\end{small}
\end{algorithm}

\section{Simulation results} \label{sec:numerical_results}
\subsection{Definition of Evaluation Metrics}

The normalized true peak sidelobe level (NTPSL), the normalized grid peak sidelobe level (NGPSL), and the normalized weighted integrated sidelobe level (NWISL) are defined to evaluate the PSL and WISL suppression performance of the waveform. 
Firstly, the NTPSL, defined in (\ref{def:NTPSL}) and expressed in ${\rm dB}$, aims to evaluate the PSL suppression performance of the waveform in a continuous Doppler frequency shift ROI.
\begin{equation} \label{def:NTPSL} 
\begin{aligned}
    \mathrm{NTPSL} = 20\log_{10}\left( \underset{(l,f_D) \in \Gamma_s}{\max}  \dfrac{|A(l,f_D)|}{N} \right),
\end{aligned}
\end{equation}
where $A(l,f_D)$ is defined in (\ref{def:DTAF}), and $\Gamma_s$ is given by (\ref{eq:Gamma_s}): $\Gamma_s = \{\pm 1,\pm 2,..., \pm L\} \times [-f_R,f_R]$.
Given a sequence ${\bf x}$, the NTPSL in (\ref{def:NTPSL}) can be obtained through the bisection method \cite{Boyd_Vandenberghe_2004}
\footnote{
    Firstly, the maximum sidelobe level for each time delay bin \mbox{$l \in \{ \pm 1, \pm 2,...,\pm L \}$} is determined over the continuous Doppler frequency shift ROI \mbox{$[f_R,f_R]$}, using the bisection method \cite{Boyd_Vandenberghe_2004} [Sec 4.2.5]. 
    The NTPSL is then obtained by selecting the largest value among these $2L$ sidelobe levels. 
}.
Secondly, the NGPSL, defined in (\ref{def:NGPSL}) and expressed in ${\rm dB}$, evaluates the normalized PSL in the sidelobe region $\Psi_s$ defined in (\ref{def:Psi_s}) that considers grid points in the delay-Doppler plane  \cite{Jing19} 
\begin{equation} \label{def:NGPSL} 
\begin{aligned}
    \mathrm{NGPSL} =20\log_{10}\left( \underset{(l,f_D) \in {\Psi}_s}{\max}  \dfrac{|A_d(l ,k; M)|}{N} \right),
\end{aligned}
\end{equation}
where $A_d(l ,k; M)$ is defined in \eqref{def:DAF},
\begin{equation} \label{def:Psi_s} 
\begin{aligned}
    {\Psi}_s &= \{\pm 1,\pm 2,...,\pm L\} \times \{ 0, \tfrac{\pm 1}{M}, \tfrac{\pm 2}{M}, \ldots, \tfrac{\pm K}{M}\}, 
\end{aligned}
\end{equation}
$K$ is an integer, and \mbox{$\tfrac{K}{M} = f_R$}.
Thirdly, the normalized weighted integrated sidelobe level (NWISL) expressed in ${\rm dB}$ is defined as
\begin{equation} \label{eq:NWISL}
    \mathrm{NWISL} = 20\log_{10} \left(\frac{1}{L|\Omega_D|} \sum_{l = 1}^{L}\sum_{f_D\in\Omega_D} w_l \left(\frac{|A(l ,f_D)|}{N}\right)\right),
\end{equation}
where $w_l$ is the weight of time delay bins, {$\Omega_D=\{ 0,\tfrac{\pm 1}{M},\tfrac{\pm 2}{M},....,\tfrac{\pm K}{M}\}$}, and \mbox{$|\Omega_D|=2K+1$}.

\subsection{Numerical Results} \label{sec:Numerical_Ex}
Numerical results\footnote{All simulations are conducted on a personal computer with 4.70 GHz AMD Ryzen 9 7900X, 128 GB.} are carried out to demonstrate the advantages of the proposed algorithm \ref{Alg:SROCR-TPSL} (labeled as ``Proposed Algorithm") in achieving unimodular sequences that exhibit low PSL across a continuous Doppler frequency shift ROI.
Two algorithms are compared with the proposed algorithm, one is the generalized maximum block improvement (GMBI)-$P^{ac}$ algorithm proposed in \cite{Chen22} (labeled as ``GMBI-$P^{ac}$") and another is the iterative sequential quartic optimization (ISQO) algorithm proposed in \cite{Yang18} (labeled as ``ISQO").
Applying the GMBI-$P^{ac}$ algorithm \cite{Chen22}, 
a set of $J$ unimodular sequences can be designed considering the PSL suppression for auto-AF and cross-AF in the problem $P^{ac}$ of \cite{Chen22}.
Since the scope of this work does not address sequence set design, we set \mbox{$J=1$} in the GMBI-$P^{ac}$ algorithm. 
With this setting, the GMBI-$P^{ac}$ algorithm reduces to solving a DAF optimization problem that minimizes PSL under unimodular constraints, which we compare with our proposed method.
Using the ISQO algorithm \cite{Yang18}, the DAF optimization problem with WISL minimization considering spectral nulling and peak-to-average-ratio (PAR) constraints was tackled. 
When the parameters are set to \mbox{$\beta=1$} and \mbox{$\gamma=1$} in the ISQO algorithm described in \cite{Yang18}, the algorithm turns to solve a DAF optimization problem that minimizes WISL under unimodular constraints which are compared with our proposed method.

\subsubsection{Case 1} \label{sim:case_1}
Unimodular sequences with \mbox{$N = 32$} are designed.
Algorithm \ref{Alg:SROCR-TPSL} is applied to consider PSL suppression over the sidelobe region \mbox{$\{ \pm 1,\pm 2,\pm 3 \}\times \left[ -\tfrac{3}{32}, \tfrac{3}{32} \right]$},
where the continuous Doppler frequency shift ROI is taken into account.
Additionally, we set 
\mbox{$\zeta = 10$},
\mbox{$\kappa=0.99$}, and \mbox{$\epsilon=0.001$}.
The selection of the $\zeta$ settings is discussed in Section \ref{subsubsec:zeta_settings}.
On the other hand, both the GMBI-$P^{ac}$ algorithm and the ISQO algorithm consider the sidelobe region over
\mbox{$\{ \pm 1,\pm 2,\pm 3 \} \times \{ 0, \tfrac{\pm 1}{32}, \tfrac{\pm 2}{32}, \tfrac{\pm 3}{32}\}$},
where the discrete delay-Doppler bins are considered.
Parameters in the GMBI-$P^{ac}$ algorithm are set as \mbox{$J=1$}, \mbox{$\rho=0.015$}, \mbox{$\mu=12$}, \mbox{$T_{\rm mbi}=100$}, and \mbox{$T_0=10000$}.
In addition, parameters in the ISQO algorithm are set as \mbox{$\beta=1$}, \mbox{$\gamma=1$}, 
\mbox{$\varepsilon=10^{-6}$}, and \mbox{$w_l=1,\forall l \in \{\pm 1,\pm 2,\pm 3\}$}. 
In each iteration of Algorithm \ref{Alg:SROCR-TPSL}, 
the values of $\delta^{\langle i \rangle}$, 
$w^{\langle i \rangle}$, and the normalized objective function values of the problem (\ref{P:SROCR_iter}), \mbox{$20\log_{10}(\tfrac{\sqrt{t^{\langle i \rangle}}}{N})$}, are shown in Fig.\ref{Fig:N_32_SROCR_parmameters}.
Moreover, the normalized DTAF results, \mbox{$20\log_{10}(\tfrac{\sqrt{A(l,f_D))}}{N})$}, obtained by Algorithm \ref{Alg:SROCR-TPSL}, the GMBI-$P^{ac}$ algorithm and the ISQO algorithm are demonstrated in Figs. \ref{Fig:N_32_SROCR_a}, \ref{Fig:N_32_GMBI_a}, and \ref{Fig:N_32_ISQO_a}.
In Figs. \ref{Fig:N_32_SROCR_b}, \ref{Fig:N_32_GMBI_b}, and \ref{Fig:N_32_ISQO_b}, the DTAF of each lag is shown.
More specifically, we plot the normalized \mbox{$A(1,f_D)$}, \mbox{$A(2,f_D)$}, and \mbox{$A(3,f_D)$} in Fig. \ref{Fig:N_32_SROCR_b}, and likewise in Figs. \ref{Fig:N_32_GMBI_b} and \ref{Fig:N_32_ISQO_b}.
It is shown in Fig. \ref{Fig:N_32_SROCR_b} that the proposed algorithm can achieve PSL suppression in the continuous Doppler frequency shift ROI, whereas the GMBI-$P^{ac}$ algorithm and the ISQO algorithm fail to do so. 
Since the GMBI-$P^{ac}$ algorithm and the ISQO algorithm only consider PSL suppression on the discrete delay Doppler frequency shift bins \mbox{$\tfrac{k}{32}$}, \mbox{$k\in\{\pm 3,\pm 2,\pm 1\}$}, some peaks may appear beyond the NGPSL as demonstrated in Figs. \ref{Fig:N_32_GMBI_b} and \ref{Fig:N_32_ISQO_b}.
In Table \ref{Table:N_32}, the NTPSL, NGPSL and NWISL values achieved by these waveforms are listed.
We can observe that the proposed algorithm achieves the lowest NTPSL (\mbox{$-29.30 \ {\rm dB}$}), with a considerable margin over the GMBI-$P^{ac}$ algorithm (\mbox{$-21.62 \ {\rm dB}$}) and the ISQO algorithm (\mbox{$-17.44 \ {\rm dB}$}).
In addition, the GMBI-$P^{ac}$ algorithm achieves the lowest NGPSL, while the ISQO algorithm achieves the lowest NWISL.

\begin{figure}[ht!]
    \centering
    \subfloat[]{\includegraphics[width=1.2in]{./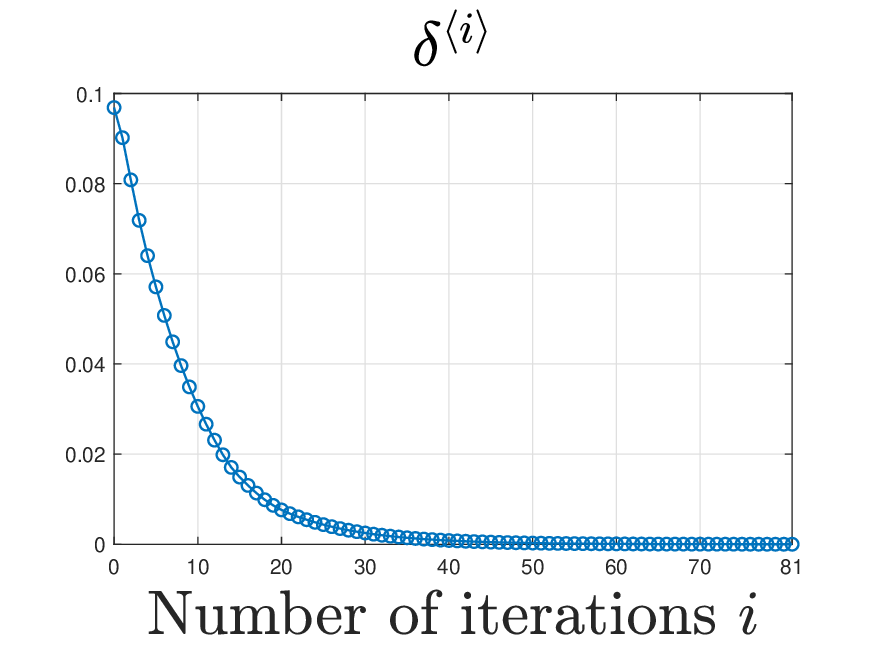} 
    \label{Fig:N_32_SROCR_delta}} 
    \hspace{-0.5cm}
    \subfloat[]{\includegraphics[width=1.2in]{./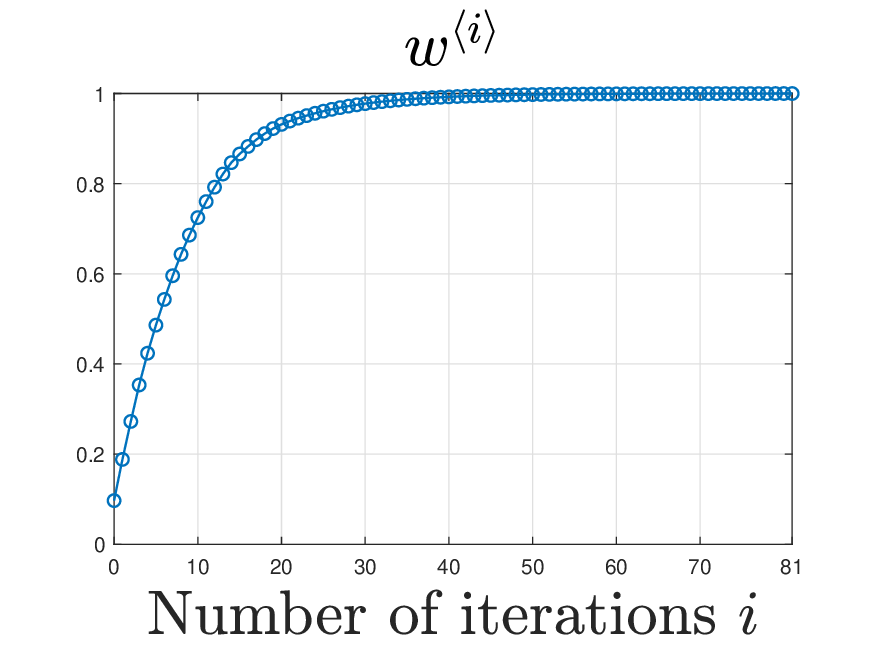}
    \label{Fig:N_32_SROCR_w}} 
    \hspace{-0.5cm}
    \subfloat[]{\includegraphics[width=1.2in]{./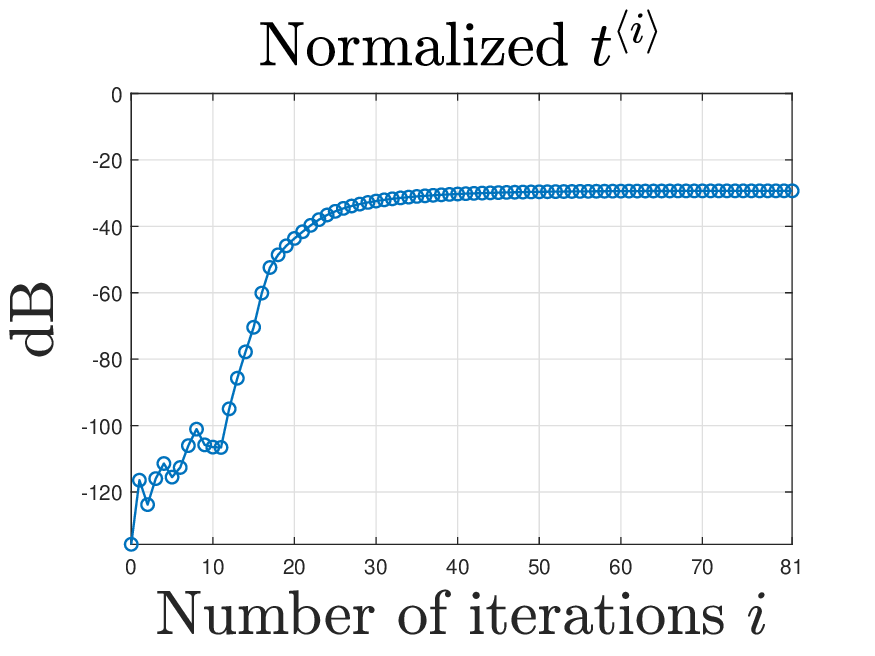} \label{Fig:N_32_SROCR_NTPSL}} 
    \caption{ The parameters of the proposed algorithm at each iteration in Case 1 (\mbox{$N=32$}): (a) $\delta^{\langle i \rangle}$, 
    (b) $w^{\langle i \rangle}$, and (c) normalized $t^{\langle i \rangle}$. }  
    \label{Fig:N_32_SROCR_parmameters}
\end{figure}

\begin{figure}[ht!]
    \centering
    \subfloat[]{\includegraphics[width=1.7in]{./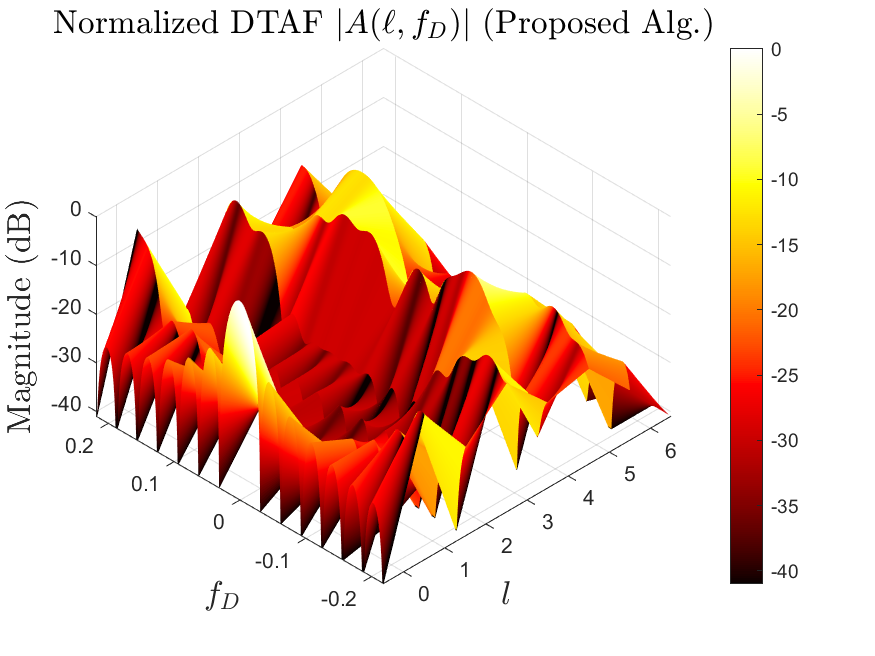} 
    \label{Fig:N_32_SROCR_a}} 
    \hspace{-0.5cm}  
    \subfloat[]{\includegraphics[width=1.8in]{./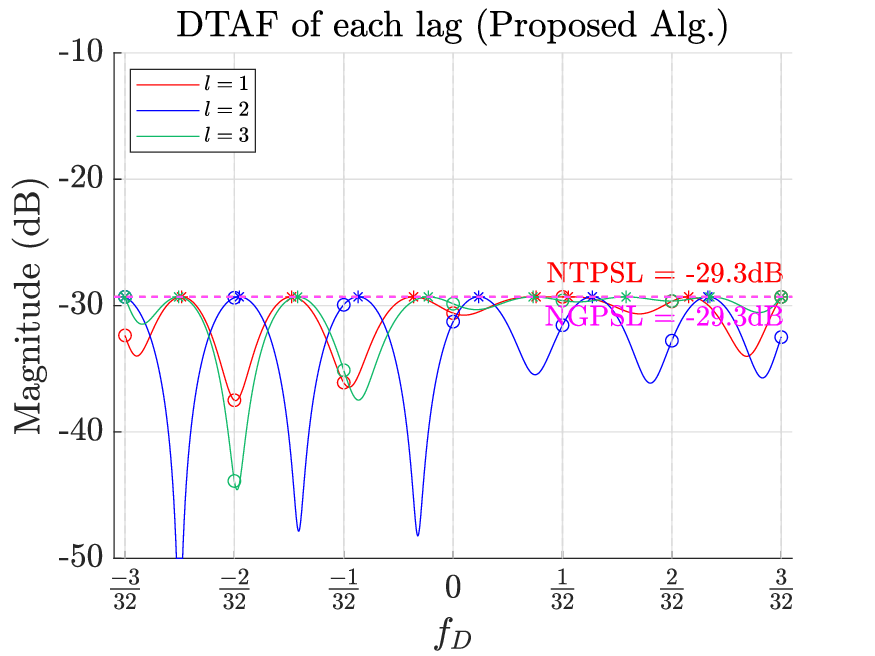}
    \label{Fig:N_32_SROCR_b}} 
    \par \vspace{-0.1mm}  
    \subfloat[]{\includegraphics[width=1.7in]{./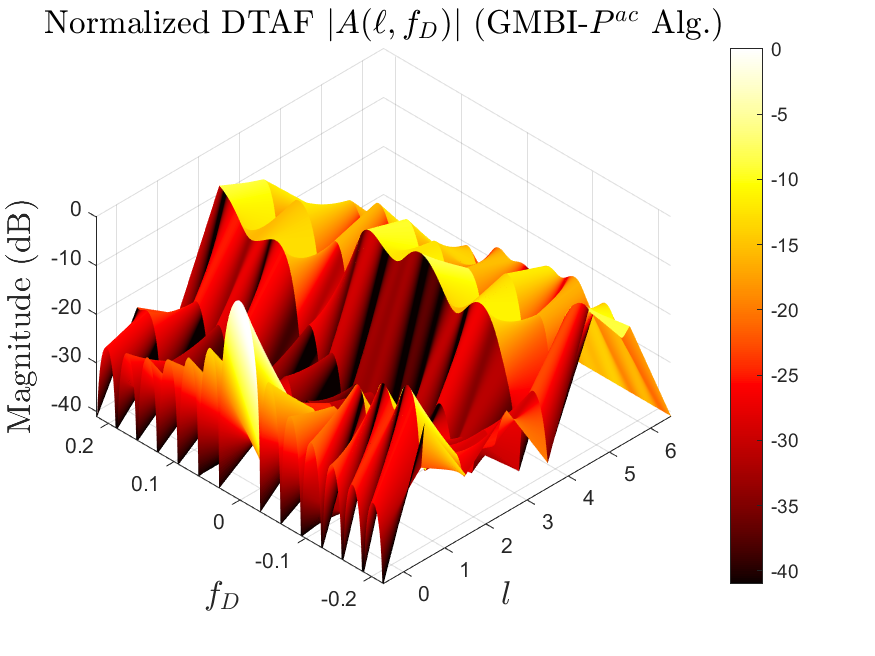}
    \label{Fig:N_32_GMBI_a}} 
    \hspace{-0.5cm} 
    \subfloat[]{\includegraphics[width=1.8in]{./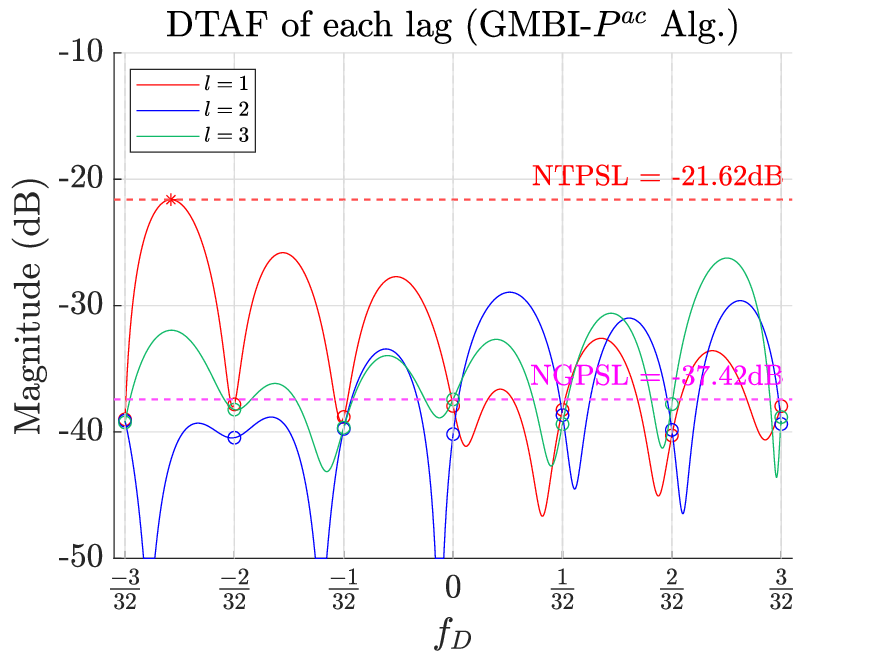}  
    \label{Fig:N_32_GMBI_b}} 
    \par \vspace{-0.1mm}  
    \subfloat[]{\includegraphics[width=1.7in]{./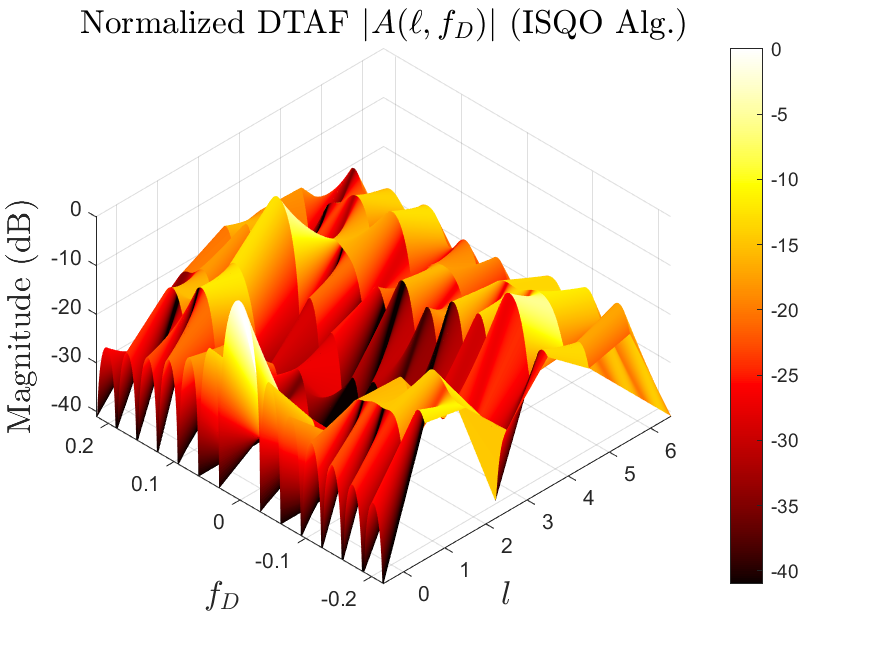}
    \label{Fig:N_32_ISQO_a}} 
    \hspace{-0.5cm} 
    \subfloat[]{\includegraphics[width=1.8in]{./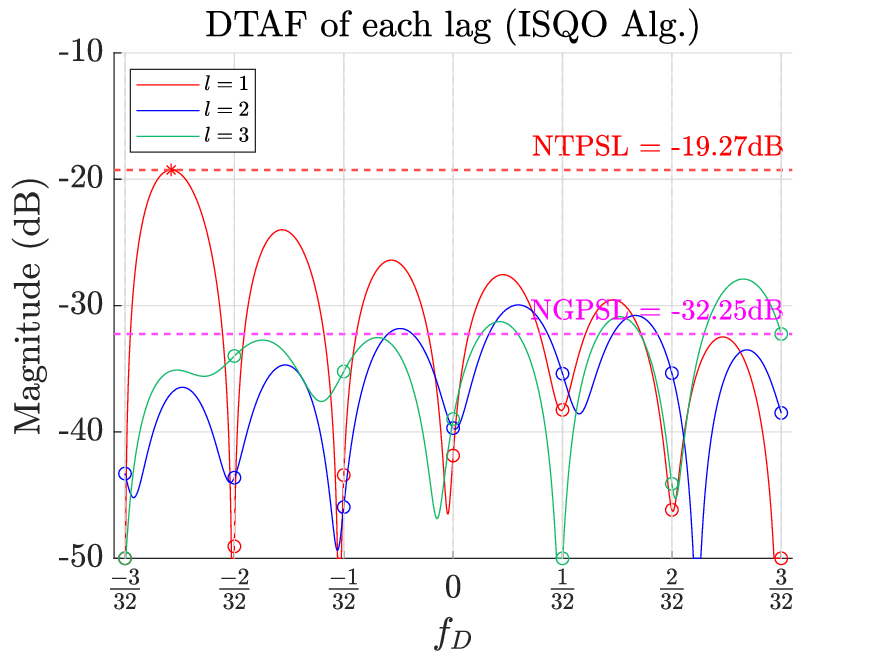}  
    \label{Fig:N_32_ISQO_b}} 
    \caption{ The DTAF and its magnitude of each lag obtained using different methods in Case 1 (\mbox{$N=32$}): (a) and (b) from the proposed algorithm, (c) and (d) from the GMBI-$P^{ac}$ algorithm \cite{Chen22}, (e) and (f) from the ISQO algorithm \cite{Yang18}. }  \label{Fig:N_32}  
\end{figure}

\begin{table}[ht!] 
    \centering
    \caption{NTPSL, NGPSL, and NWISL for different methods in Case 1 (\mbox{$N=32$}).}
    \label{Table:N_32}
    \begin{scriptsize}
    \begin{tabular}{  c  c  c  c  } 
        \hline
        \textbf{Method} & \textbf{NTPSL} (${\rm dB}$) & \textbf{NGPSL} (${\rm dB}$) & \textbf{NWISL} (${\rm dB}$) \\
        \hline
        Proposed Algorithm & $-29.30$ & $-29.30$ & $-31.25$ \\
        GMBI-$P^{ac}$ Algorithm \cite{Chen22} & $-21.62$ & $-37.42$ & $-38.91$ \\ 
        ISQO Algorithm \cite{Yang18} & $-19.27$ & $-32.25$ & $-40.47$ \\ 
        \hline
    \end{tabular}
    \end{scriptsize}
\end{table}

\subsubsection{Case 2}
Unimodular sequences with \mbox{$N = 64$} are designed. 
We apply Algorithm \ref{Alg:SROCR-TPSL} to consider the PSL suppression over the sidelobe region of
\mbox{$\{ \pm 1,\pm 2,..., \pm 6 \} \times \left[ -\tfrac{3}{64}, \tfrac{3}{64} \right]$},
and we set \mbox{$\zeta = 10$}, \mbox{$\kappa=0.99$}, and \mbox{$\epsilon=0.001$}.
While, both the GMBI-$P^{ac}$ algorithm and the ISQO algorithm consider the sidelobe region of
\mbox{$\{ \pm 1,\pm 2,...,\pm 6 \} \times \{ 0, \tfrac{\pm 1}{64}, \tfrac{\pm 2}{64}, \tfrac{\pm 3}{64}\}$}.
We have parameters in the GMBI-$P^{ac}$ algorithm set as \mbox{$J=1$}, \mbox{$\rho=0.01$}, \mbox{$\mu=100$}, \mbox{$T_{\rm mbi}=100$}, and \mbox{$T_0=10000$}.
Also, the ISQO algorithm is run with the parameters \mbox{$\beta=1$}, \mbox{$\gamma=1$}, 
\mbox{$\varepsilon=10^{-6}$}, and \mbox{$w_l=1,\forall l \in \{\pm 1,...,\pm 6\}$}.
In each iteration of Algorithm \ref{Alg:SROCR-TPSL}, 
the values of $\delta^{\langle i \rangle}$, 
$w^{\langle i \rangle}$, and the normalized objective function values of the optimization problem (\ref{P:SROCR_iter}) are shown in Fig.\ref{Fig:N_64_SROCR_parmameters}.
The normalized DTAF results of those sequences are shown in Figs. \ref{Fig:N_64_SROCR_a}, \ref{Fig:N_64_GMBI_a}, and \ref{Fig:N_64_ISQO_a}.
In Figs. \ref{Fig:N_64_SROCR_b}, \ref{Fig:N_64_GMBI_b}, and \ref{Fig:N_64_ISQO_b}, 
the DTAF of each lag is plotted.
It can be observed from Fig. \ref{Fig:N_32_SROCR_b} that the proposed algorithm can suppress PSL in the continuous Doppler frequency shift ROI, while the GMBI-$P^{ac}$ algorithm and the ISQO algorithm cannot. 
Additionally, the NTPSL, NGPSL, and NWISL values are summarized in Table \ref{Table:N_64}.
We can see that the proposed method achieves the lowest NTPSL compared to the GMBI-$P^{ac}$ algorithm and the ISQO algorithm.
Meanwhile, the GMBI-$P^{ac}$ algorithm attains the lowest NGPSL, while the ISQO algorithm achieves the lowest NWISL. 

\begin{figure}[ht!]
    \centering
    \subfloat[]{\includegraphics[width=1.2in]{./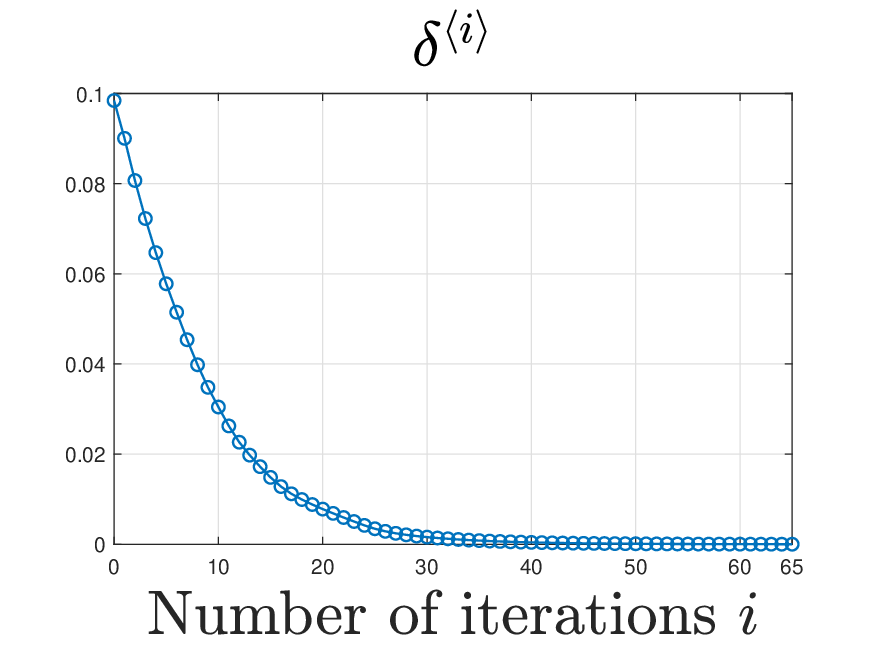} 
    \label{Fig:N_64_SROCR_delta}} 
    \hspace{-0.5cm}
    \subfloat[]{\includegraphics[width=1.2in]{./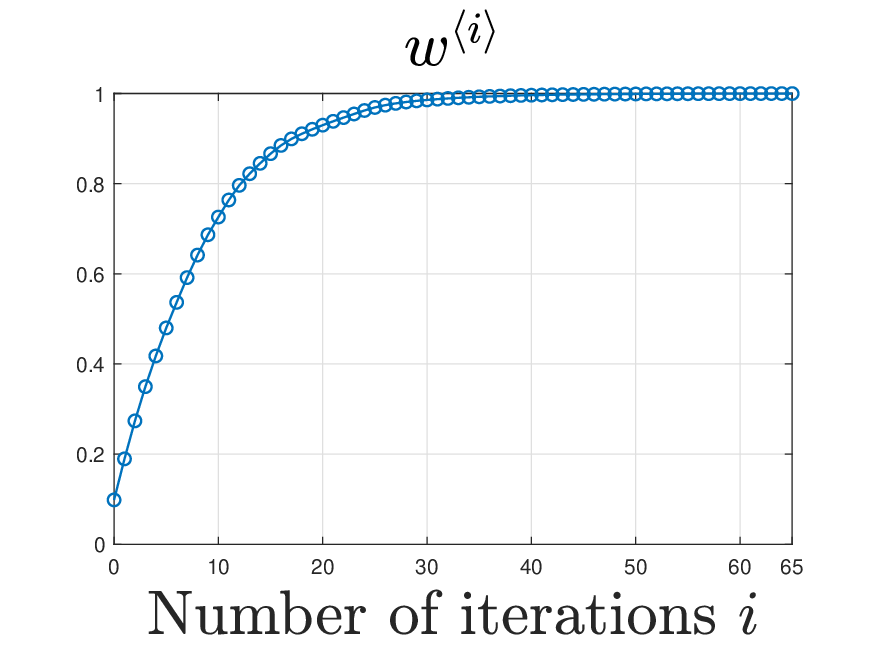}
    \label{Fig:N_64_SROCR_w}} 
    \hspace{-0.5cm}
    \subfloat[]{\includegraphics[width=1.2in]{./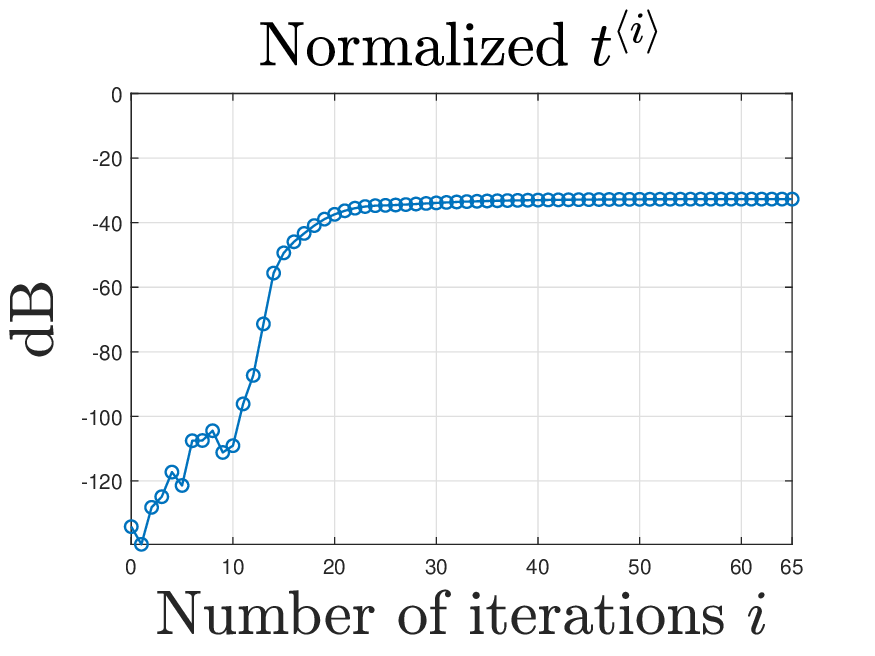} \label{Fig:N_64_SROCR_NTPSL}} 
    \caption{ The parameters of the proposed algorithm at each iteration in Case 2 (\mbox{$N=64$}): (a) $\delta^{\langle i \rangle}$, 
    (b) $w^{\langle i \rangle}$, and (c) normalized $t^{\langle i \rangle}$. }  
    \label{Fig:N_64_SROCR_parmameters}
\end{figure}

\begin{figure}[ht!]
    \centering
    \subfloat[]{\includegraphics[width=1.7in]{./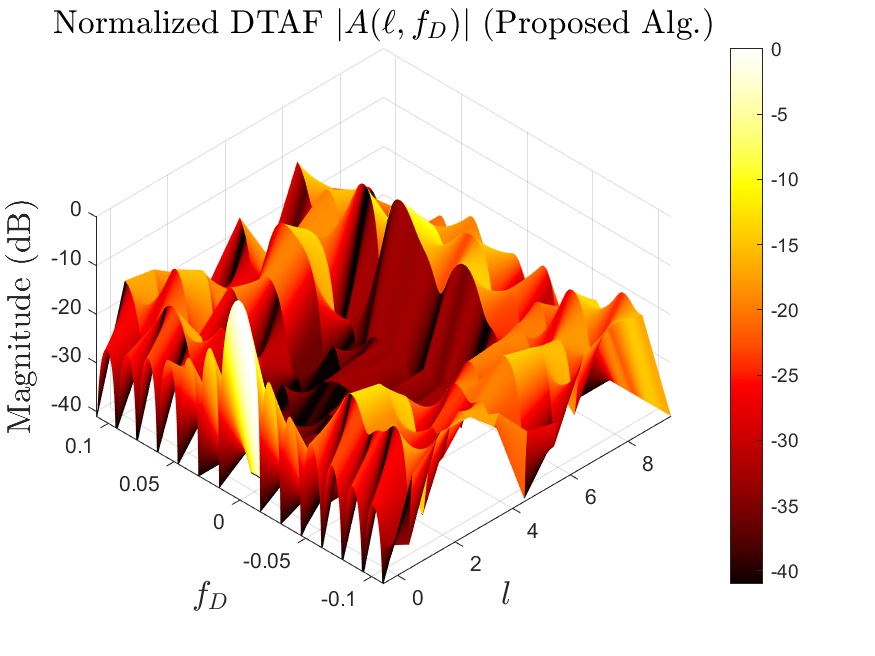} 
    \label{Fig:N_64_SROCR_a}} 
    \hspace{-0.5cm}  
    \subfloat[]{\includegraphics[width=1.8in]{./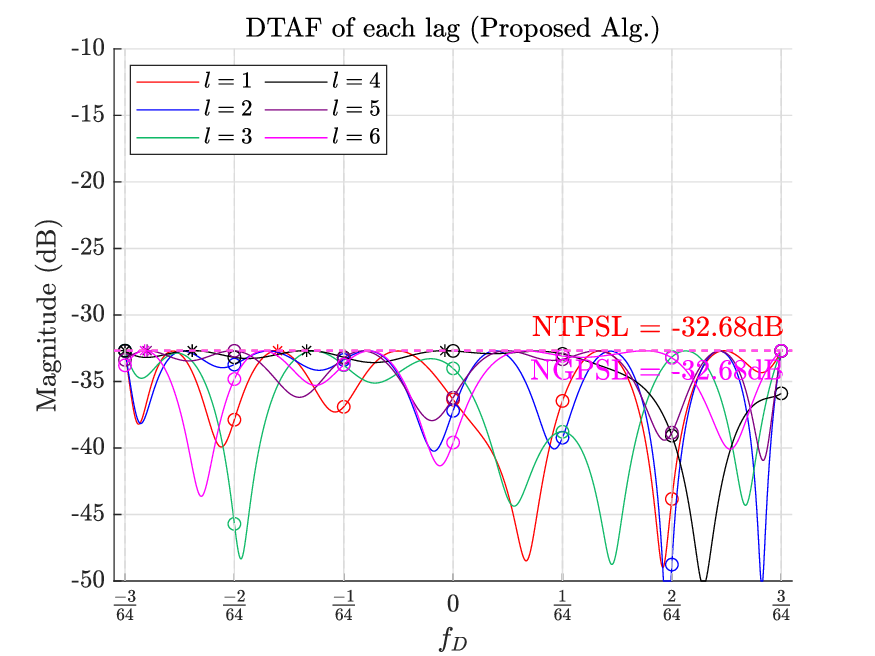}
    \label{Fig:N_64_SROCR_b}} 
    \par \vspace{-0.1mm}  
    \subfloat[]{\includegraphics[width=1.7in]{./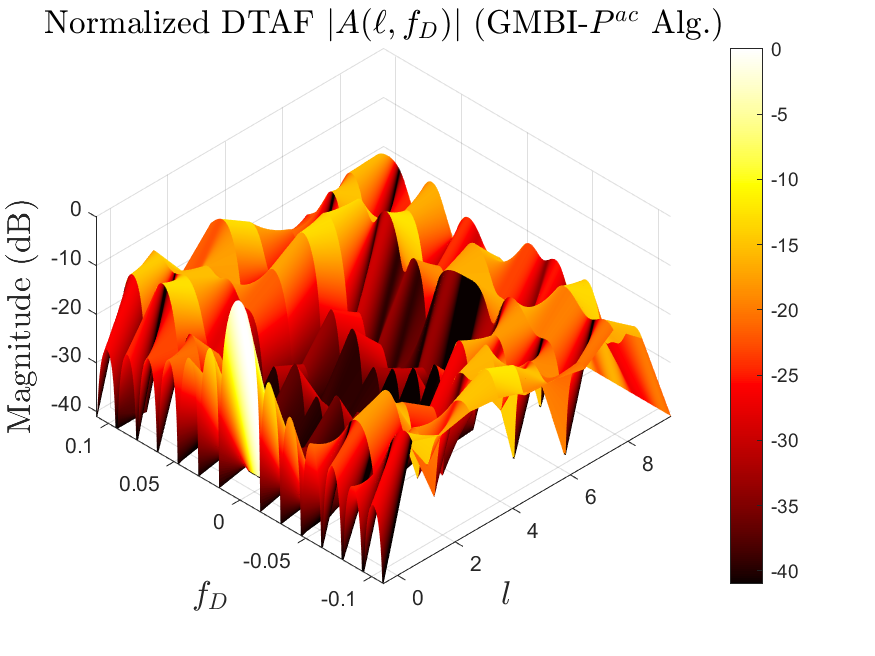}
    \label{Fig:N_64_GMBI_a}} 
    \hspace{-0.5cm} 
    \subfloat[]{\includegraphics[width=1.8in]{./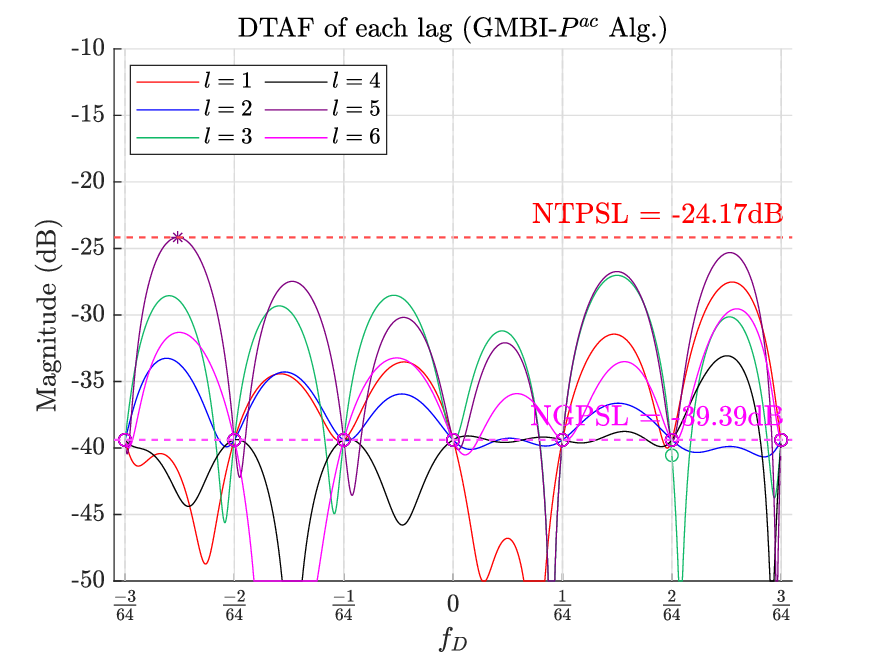}  
    \label{Fig:N_64_GMBI_b}} 
    \par \vspace{-0.1mm}  
    \subfloat[]{\includegraphics[width=1.7in]{./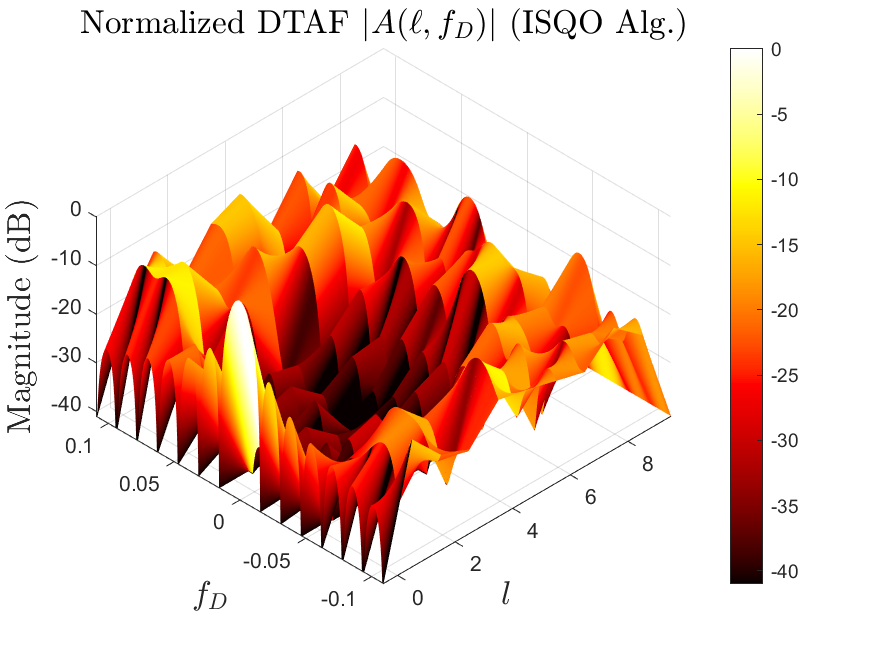}
    \label{Fig:N_64_ISQO_a}} 
    \hspace{-0.5cm} 
    \subfloat[]{\includegraphics[width=1.8in]{./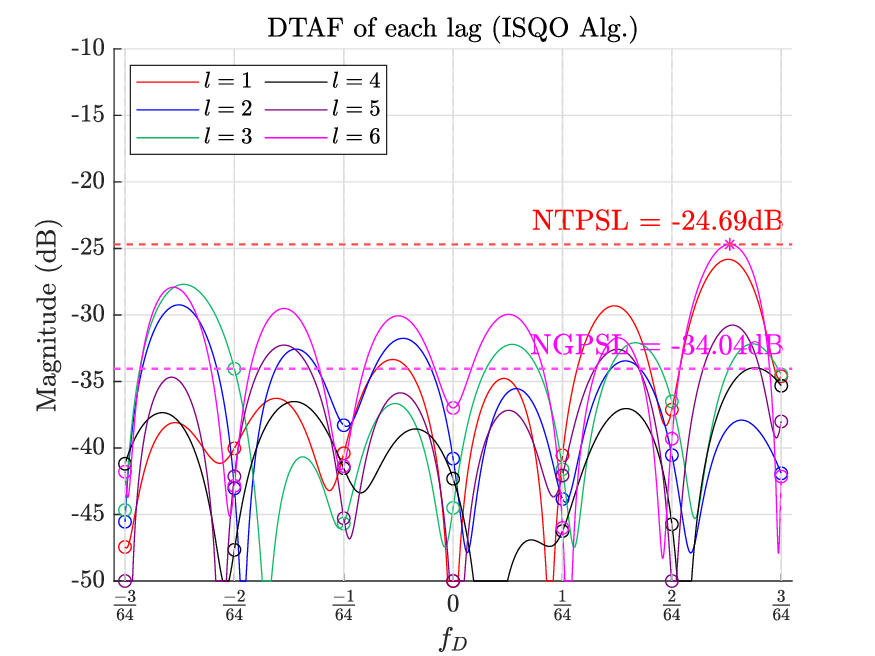}  
    \label{Fig:N_64_ISQO_b}} 
    \caption{ The DTAF and its magnitude of each lag obtained using different methods in Case 2 (\mbox{$N=64$}): (a) and (b) from the proposed algorithm, (c) and (d) from the GMBI-$P^{ac}$ algorithm \cite{Chen22}, (e) and (f) from the ISQO algorithm \cite{Yang18}. }  \label{Fig:N_64} 
\end{figure}
\begin{table}[ht!] 
    \centering
    \caption{NTPSL, NGPSL, and NWISL for different methods in Case 2 (\mbox{$N=64$}).}
    \label{Table:N_64}
    \begin{scriptsize}
    \begin{tabular}{  c  c  c  c } 
        \hline
        \textbf{Method} & \textbf{NTPSL} (${\rm dB}$) & \textbf{NGPSL} (${\rm dB}$) & \textbf{NWISL} (${\rm dB}$)\\
        \hline
        Proposed Algorithm & $-32.68$ & $-32.68$ & $-34.77$ \\
        GMBI-$P^{ac}$ Algorithm \cite{Chen22} & $-24.17$  & $-39.39$ & $-39.42$ \\ 
        ISQO Algorithm \cite{Yang18} & $-24.69$ & $-34.04$ & $-41.28$\\ 
        \hline
    \end{tabular}
    \end{scriptsize}
\end{table}

\subsubsection{Case 3} \label{sec:sim_Case3}
We also design unimodular sequences with a longer length of \mbox{$N = 128$}.
Algorithm \ref{Alg:SROCR-TPSL} is applied to consider the PSL suppression over the sidelobe region
\mbox{$\{ \pm 1, \pm 2,..., \pm 12 \} \times\left[ -\tfrac{3}{128}, \tfrac{3}{128} \right]$}.
In addition, we set {$\zeta = 10$}, {$\kappa=0.99$}, and {$\epsilon=0.005$}.
While, both the GMBI-$P^{ac}$ algorithm and the ISQO algorithm consider the PSL suppression over 
\mbox{$\{ \pm 1,\pm 2,...,\pm 12 \} \times \{ 0, \tfrac{\pm 1}{128}, \tfrac{\pm 2}{128}, \tfrac{\pm 3}{128}\}$}.
Parameters in the GMBI-$P^{ac}$ algorithm are set as \mbox{$J=1$}, \mbox{$\rho=0.01$}, \mbox{$\mu=50$}, \mbox{$T_{\rm mbi}=100$}, and \mbox{$T_0=10000$}.
Additionally, parameters in the ISQO algorithm are set as \mbox{$\beta=1$}, \mbox{$\gamma=1$}, \mbox{$\varepsilon=10^{-6}$}, and \mbox{$w_l=1,\forall l \in \{\pm 1,...,\pm 12\}$}.
In each iteration of Algorithm \ref{Alg:SROCR-TPSL}, 
the values of $\delta^{\langle i \rangle}$, 
$w^{\langle i \rangle}$, and the normalized objective function values of the optimization problem (\ref{P:SROCR_iter}) are shown in Fig.\ref{Fig:N_128_SROCR_parmameters}.
The normalized DTAF results of those sequences are shown in Fig. \ref{Fig:N_128}, and their NTPSL, NGPSL, and NWISL values are listed in Table \ref{Table:N_128}.
From Table \ref{Table:N_128}, we can observe that the proposed algorithm outperforms the GMBI-$P^{ac}$ algorithm and the ISQO algorithm in terms of the NTPSL criterion by \mbox{$14.18 \ {\rm dB}$} and \mbox{$12 \ {\rm dB}$}, respectively.

\begin{figure}[ht!]
    \centering
    \subfloat[]{\includegraphics[width=1.2in]{./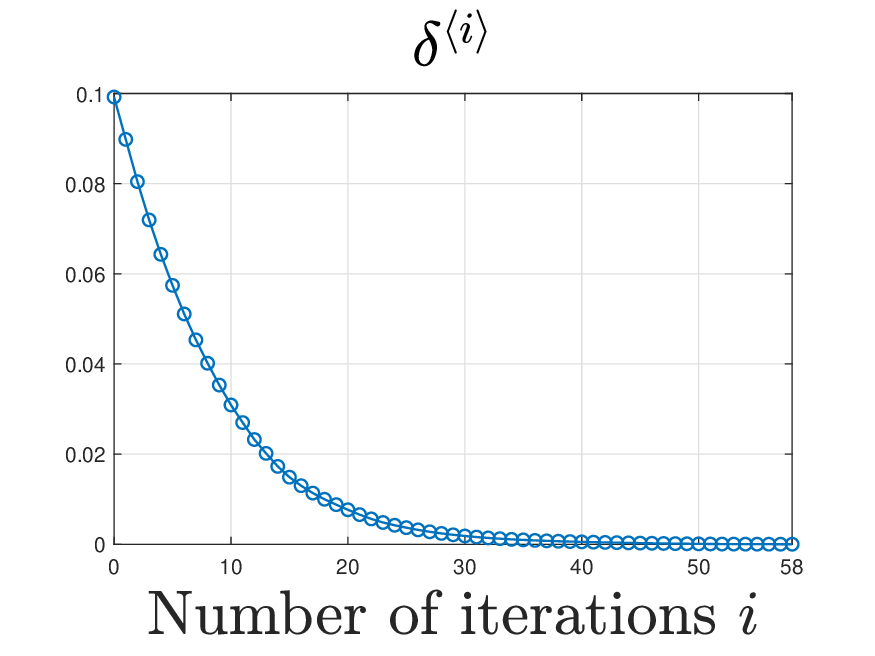} 
    \label{Fig:N_128_SROCR_delta}} 
    \hspace{-0.5cm}
    \subfloat[]{\includegraphics[width=1.2in]{./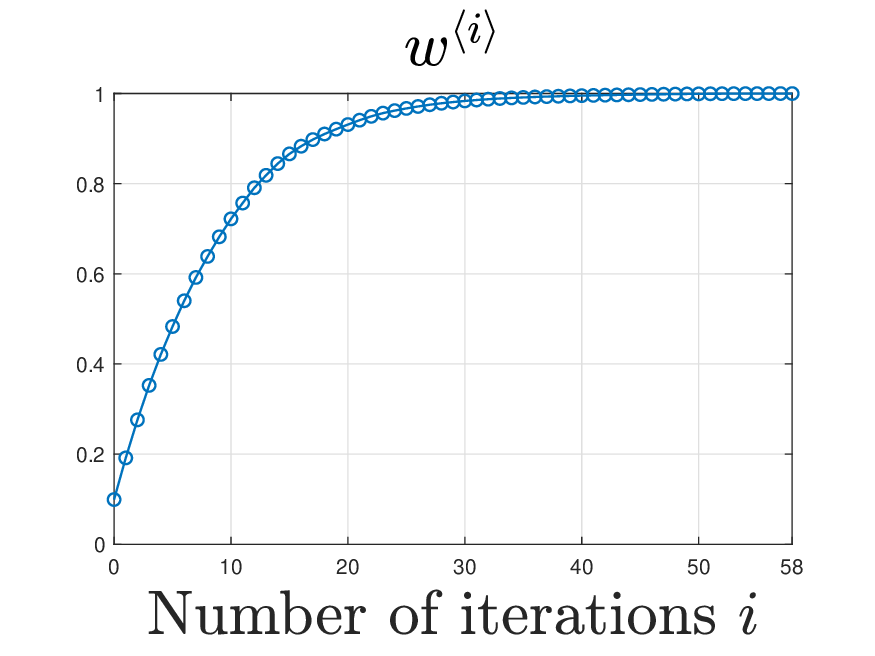}
    \label{Fig:N_128_SROCR_w}} 
    \hspace{-0.5cm}
    \subfloat[]{\includegraphics[width=1.2in]{./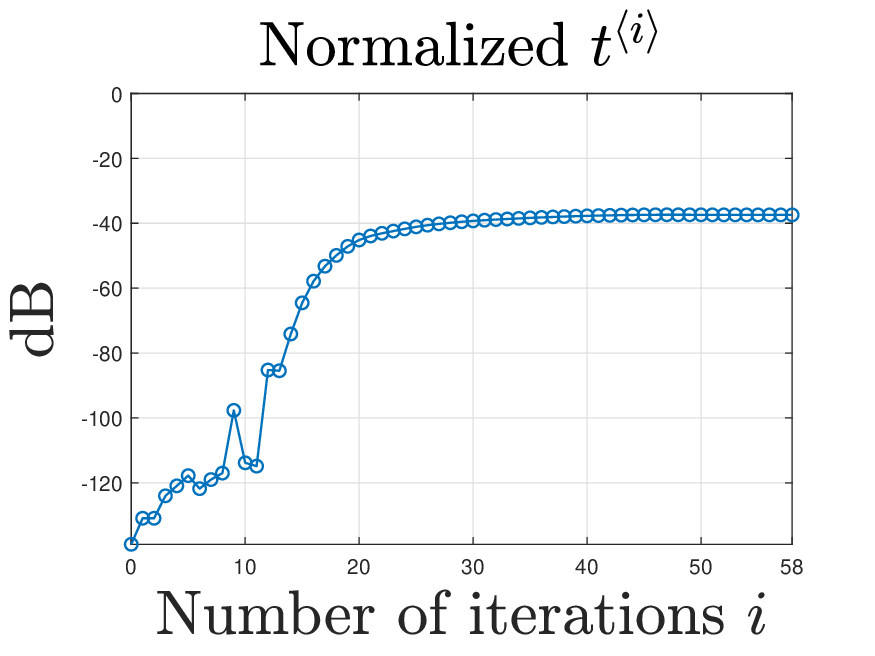} \label{Fig:N_128_SROCR_NTPSL}} 
    \caption{ The parameters of the proposed algorithm at each iteration in Case 1 (\mbox{$N=128$}): (a) $\delta^{\langle i \rangle}$, 
    (b) $w^{\langle i \rangle}$, and (c) normalized $t^{\langle i \rangle}$. }  
    \label{Fig:N_128_SROCR_parmameters}
\end{figure}

\begin{figure}[ht!]
    \centering
    \subfloat[]{\includegraphics[width=1.7in]{./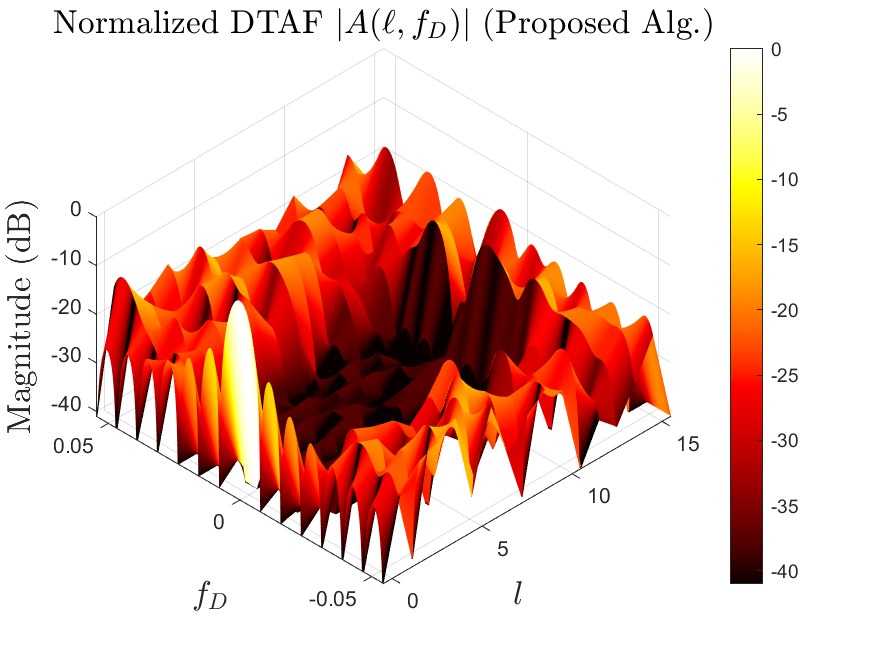} 
    \label{Fig:N_128_SROCR_a}} 
    \hspace{-0.5cm}  
    \subfloat[]{\includegraphics[width=1.8in]{./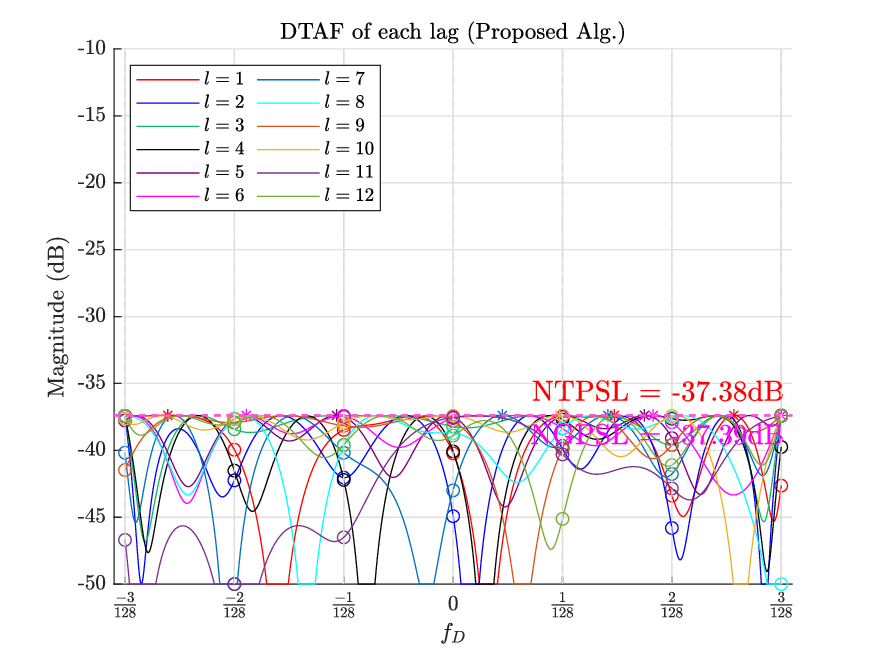}
    \label{Fig:N_128_SROCR_b}} 
    \par \vspace{-0.1mm}  
    \subfloat[]{\includegraphics[width=1.7in]{./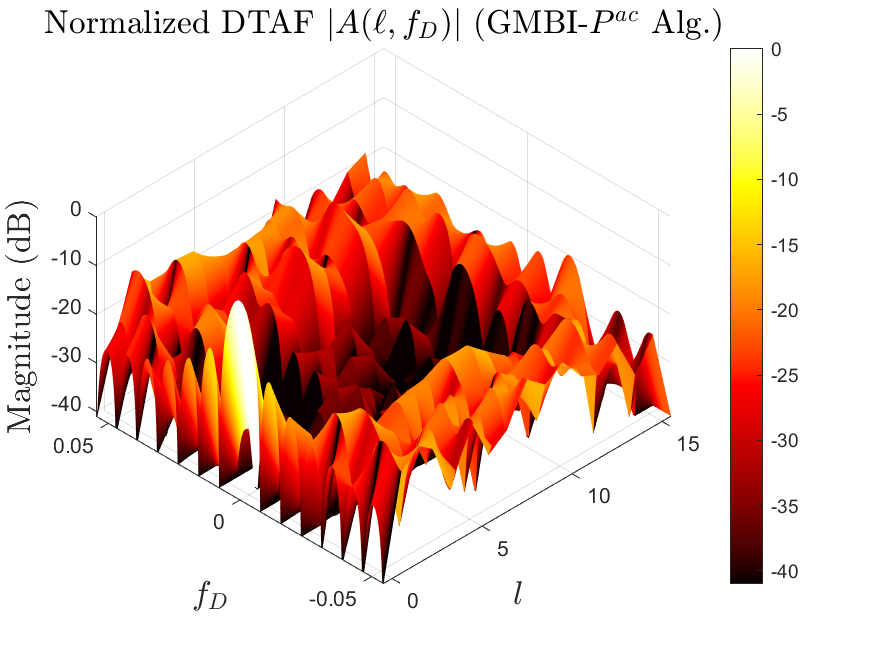}
    \label{Fig:N_128_GMBI_a}} 
    \hspace{-0.5cm} 
    \subfloat[]{\includegraphics[width=1.8in]{./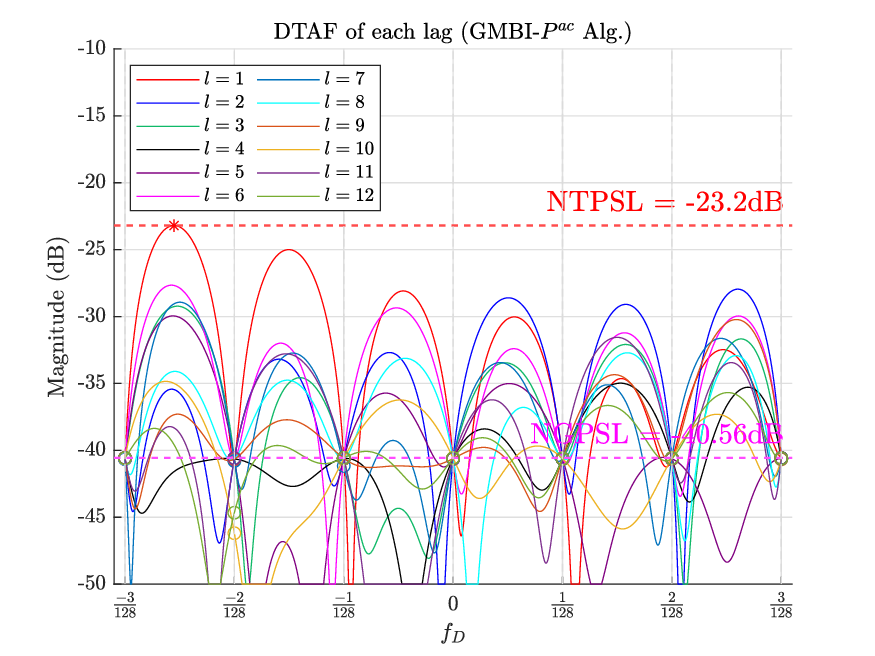}  
    \label{Fig:N_128_GMBI_b}} 
    \par \vspace{-0.1mm}  
    \subfloat[]{\includegraphics[width=1.7in]{./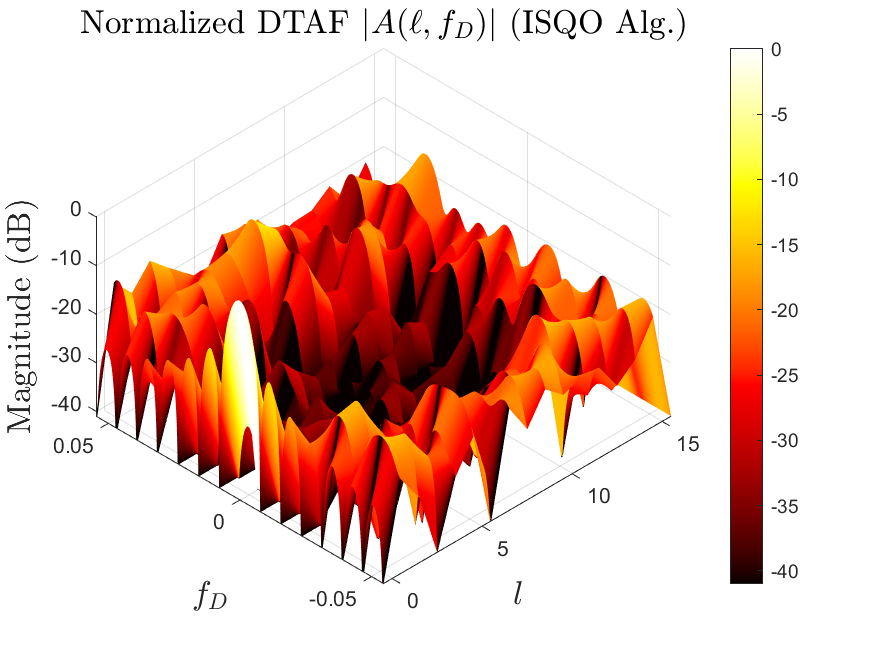}
    \label{Fig:N_128_ISQO_a}} 
    \hspace{-0.5cm} 
    \subfloat[]{\includegraphics[width=1.8in]{./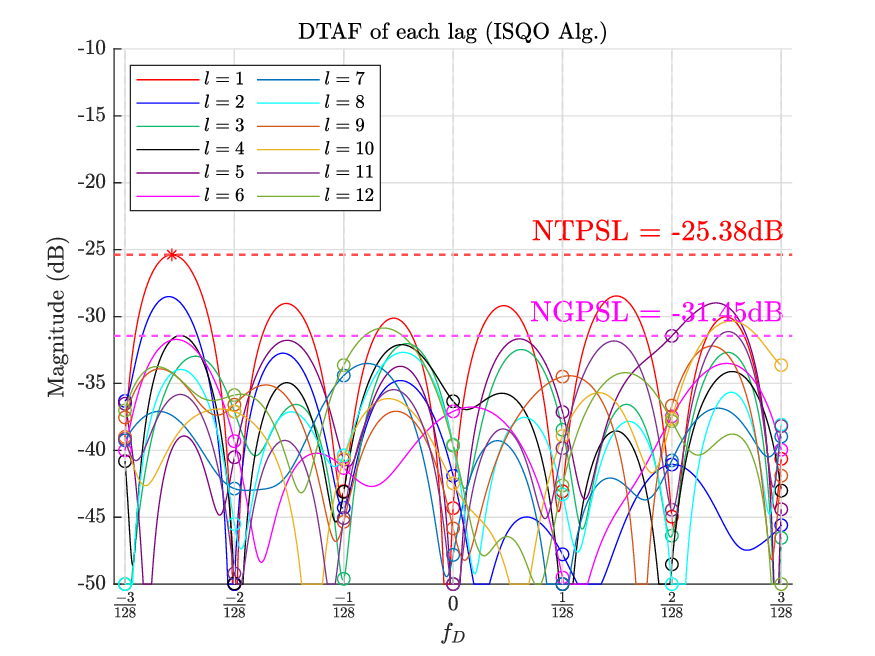}  
    \label{Fig:N_128_ISQO_b}} 
    \caption{ The DTAF and its magnitude of each lag obtained using different methods in Case 3 (\mbox{$N=128$}): (a) and (b) from the proposed algorithm, (c) and (d) from the GMBI-$P^{ac}$ algorithm \cite{Chen22}, (e) and (f) from the ISQO algorithm \cite{Yang18}. }  \label{Fig:N_128}  
\end{figure}

\begin{table}[ht!] 
    \centering
    \caption{NTPSL, NGPSL, and NWISL for different methods in Case 3 (\mbox{$N=128$}).} 
    \label{Table:N_128}
    \begin{scriptsize}
    \begin{tabular}{  c  c  c  c } 
        \hline
        \textbf{Method} & \textbf{NTPSL} (${\rm dB}$) & \textbf{NGPSL} (${\rm dB}$) & \textbf{NWISL} (${\rm dB}$) \\
        \hline
        Proposed Algorithm & $-37.38$ & $-37.39$ & $-39.20$ \\
        GMBI-$P^{ac}$ Algorithm \cite{Chen22} & $-23.20$ & $-40.56$ & $-40.71$ \\ 
        ISQO Algorithm \cite{Yang18} & $-25.38$ & $-31.45$ & $-40.79$ \\ 
        \hline
    \end{tabular}
    \end{scriptsize}
\end{table}

The numerical results in Case 1 (\mbox{$N=32$}), Case 2 (\mbox{$N=64$}), and Case 3 (\mbox{$N=128$}) demonstrate that the proposed algorithm outperforms the GMBI-$P^{ac}$ algorithm and the ISQO algorithm in achieving a lower NTPSL.
The primary reason for this advantage of the proposed algorithm is its ability to suppress PSL of the DTAF across a continuous Doppler frequency shift ROI.
However, the GMBI-$P^{ac}$ algorithm and the ISQO algorithm optimize the waveform by considering only the Doppler frequency shift bins.
Therefore, the PSL suppression over the continuous Doppler frequency shift ROI is not guaranteed. 
It can be observed from Table \ref{Table:N_32}, \ref{Table:N_64}, and \ref{Table:N_128} that a lower NGPSL or NWISL does not imply a lower NTPSL.
Moreover, the PSL can be misjudged when using the NGPSL or NWISL criterion instead of the NTPSL criterion. 
In Section \ref{sec:Application}, we will demonstrate that sidelobes of AF should be suppressed over a continuous Doppler frequency shift ROI to prevent false alarms in target detection for arbitrary speeds.

\subsubsection{Consider various $\zeta$ settings in \eqref{delta_i} of the proposed algorithm} \label{subsubsec:zeta_settings}
In the following, we use Section \ref{sec:Numerical_Ex} Case 1 (\mbox{$N=32$}) as an example to show that different $\zeta$ settings in \eqref{w_0} and \eqref{delta_i} of the proposed algorithm may affect both the NTPSL performance and the number of iterations required for convergence.
The results corresponding to different $\zeta$ settings are presented in Table \ref{Table:N_32_zeta}.
Our experiments indicate that a smaller $\zeta$ value typically leads to the proposed algorithm converging in fewer iterations.
This is mainly because a smaller $\zeta$ leads to a larger increment of $w^{\langle i \rangle}$ in \eqref{w_i}, which accelerates the approximation of the constraint in \eqref{P:SROCR_iter_f} toward the rank-one constraint in \eqref{opt: rank-1 constraint of problem during SROCR}.
From Table \ref{Table:N_32_zeta}, we can observe that \mbox{$\zeta=50$} results in the lowest NTPSL (\mbox{$-29.52 \ {\rm dB}$}) taking $342$ iterations for the proposed algorithm to converge.
Meanwhile, \mbox{$\zeta=10$} results in a slightly higher NTPSL (\mbox{$-29.30\ {\rm dB}$}) but requires only $81$ iterations, which is about $4$ times fewer than the number of iterations required when \mbox{$\zeta=50$}.
Considering both the NTPSL performance and convergence efficiency, we select \mbox{$\zeta=10$} as a representative setting to present the results in Section \ref{sec:Numerical_Ex} Case 1 (\mbox{$N=32$}).

\begin{table}[ht!] 
    \centering
    \caption{Results for various $\zeta$ settings of the proposed algorithm in Case 1 (\mbox{$N=32$}).}
    \label{Table:N_32_zeta}
    \begin{scriptsize}
    \begin{tabular}{  c  c  c  c  c  } 
        \hline
        \textbf{$\zeta$} & \textbf{NTPSL} (${\rm dB}$) & \textbf{NGPSL} (${\rm dB}$) & \textbf{NWISL} (${\rm dB}$) & \textbf{Iterations} \\
        \hline
        \hline
         $2$  & $-24.53$ & $-24.53$ & $-27.58$ & $9$   \\
        \hline
         $5$  & $-29.22$ & $-29.22$ & $-31.91$ & $29$ \\
        \hline
         $10$ & $-29.30$ & $-29.30$ & $-31.25$ & $81$ \\
        \hline
         $20$ & $-28.90$ & $-28.92$ & $-31.42$ & $139$ \\
        \hline
         $30$ & $-28.75$ & $-28.78$ & $-31.99$ & $178$ \\
        \hline
         $40$ & $-29.42$ & $-29.42$ & $-31.39$ & $286$\\
        \hline
         $50$ & $-29.52$ & $-29.54$ & $-31.93$ & $342$ \\
        \hline
         $60$ & $-29.34$ & $-29.34$ & $-30.75$ & $954$ \\
        \hline
    \end{tabular}
    \end{scriptsize}
\end{table}

\subsection{Application in target detection} \label{sec:Application}
In the following, the waveforms generated in Section \ref{sec:Numerical_Ex} Case 3 (\mbox{$N=128$}) are applied to target detection scenarios to demonstrate that the waveform designed by the proposed algorithm is less likely to cause false alarms, particularly when the target speed does not correspond to an integer multiple of the system's velocity resolution.
We consider a monostatic radar system in which the operating wavelength is \mbox{$\lambda=0.02$} m, and the sampling frequency is \mbox{$f_s=1 \ {\rm MHz}$}.
The velocity resolution of the system can be obtained as \mbox{$\tfrac{\lambda f_s}{2N}=78.13 \ {\rm m/s}$} \cite{Levanon04}. 
In addition, the range resolution of the system can be calculated as \mbox{$\tfrac{cT_s}{2}=150 \ {\rm m}$} \cite{Levanon04}, where $c$ is the speed of light and \mbox{$T_s=\tfrac{1}{f_s}$}. 
Moreover, suppose that a false alarm probability $P_{\rm FA}$ is to be achieved, then the detection threshold can be set as \mbox{$\eta=-\ln(P_{\rm FA})\sigma_w^2$} \cite{Richards14} 
\footnote{
    According to \cite{Richards14} [Sec 6.3.2], the detection threshold \mbox{$\eta=-\ln(P_{\rm FA})\sigma_w^2$} is derived based on the detection of a nonfluctuating target in white Gaussian noise with a square-law detector.
},
where $\sigma_w^2$ is the noise variance.
The noise is assumed to be a zero-mean white Gaussian noise with a variance of \mbox{$-45 \ {\rm dB}$}.
We choose \mbox{$P_{\rm FA}=10^{-8}$}, and then the detection threshold can be calculated as \mbox{$\eta=-\ln(10^{-8})10^{\tfrac{-45}{10}}$} (\mbox{$=-32.35 \ {\rm dB}$}).

We first assume that a stationary target is located at \mbox{$22,200 \ {\rm m}$}, and the power of the received echo signal reflected from the target is \mbox{$0 \ {\rm dB}$}.
In Figs. \ref{Fig:scenario_1_proposed_alg}, \ref{Fig:scenario_1_GMBI}, and \ref{Fig:scenario_1_ISQO}, range-velocity plots computed from the received echo signals using the transmitted waveforms generated in Section \ref{sec:Numerical_Ex} Case 3 (\mbox{$N=128$}) are illustrated.
Since the echo signal reflected from the target exceeds the detection threshold in these plots, the target is successfully detected. 
In addition, the sidelobes in Figs. \ref{Fig:scenario_1_proposed_alg} and \ref{Fig:scenario_1_GMBI} are lower than the detection threshold \mbox{$\eta=-32.25 \ {\rm dB}$}, and therefore do not result in false alarms.
Only sporadic peaks exceeding the threshold $\eta$ are found in Fig. \ref{Fig:scenario_1_ISQO} (marked with ``x''). 

However, when the target speed does not correspond to an integer multiple of the system's velocity resolution, severe false alarms occur when using the transmitted waveforms generated by the GMBI-$P^{ac}$ algorithm \cite{Chen22} and the ISQO algorithm \cite{Yang18}.
Assume that the target is moving toward the radar with a speed of $-26.04$ m/s (roughly one-third of the system's velocity resolution).
In Figs. \ref{Fig:scenario_2_GMBI} and \ref{Fig:scenario_2_ISQO},
severe false alarms occur and are marked. 
In contrast, no false alarms occur in Fig. \ref{Fig:scenario_2_proposed_alg} when using the proposed waveform.
The primary reason for the occurrence of false alarms in Figs. \ref{Fig:scenario_2_GMBI} and \ref{Fig:scenario_2_ISQO} is that the GMBI-$P^{ac}$ algorithm and the ISQO algorithm only consider PSL or WISL suppression on an integer multiple of Doppler frequency shift bins.
They can ensure low sidelobe levels in the range-velocity plots when the target speed corresponds to an integer multiple of the system's velocity resolution, as displayed in \cite{Chen22} and \cite{Yang18}, and our demonstrations in Figs. \ref{Fig:scenario_1_GMBI} and \ref{Fig:scenario_1_ISQO}.
However, when the target speed is a fractional multiple of the system's velocity resolution, the unexpected sidelobes that appear in the range-velocity plots may cause false alarms.
Moreover, we can observe in Figs \ref{Fig:scenario_2_proposed_alg}, \ref{Fig:scenario_2_GMBI}, and \ref{Fig:scenario_2_ISQO} that a bright strip appears at \mbox{$22,200 \ {\rm m}$}. This occurs because the zero-delay cut of the DTAF for the unimodular sequence is approximated to a sinc function \cite{Zhang16} and cannot be optimized. Despite this, post-processing techniques can be applied to mitigate the impact of the sinc function during target detection.
In summary, the proposed algorithm considering PSL suppression over a continuous Doppler frequency shift ROI is of practical importance to prevent false alarms for arbitrary target velocities.

\begin{figure}[ht!]
    \centering
    \subfloat[]{\includegraphics[width=1.7in]{./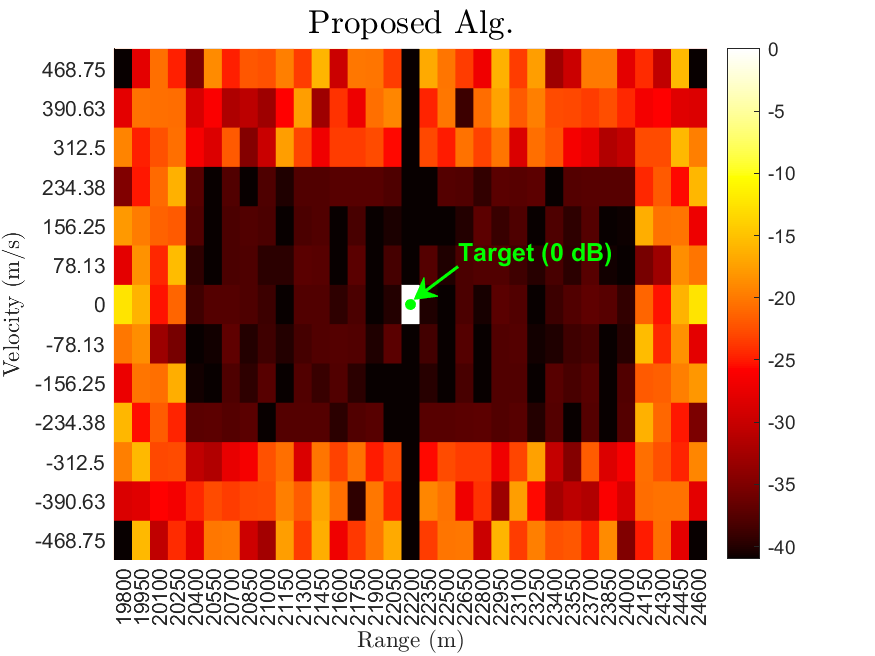} 
    \label{Fig:scenario_1_proposed_alg}} 
    \hspace{-0.5cm}  
    \subfloat[]{\includegraphics[width=1.7in]{./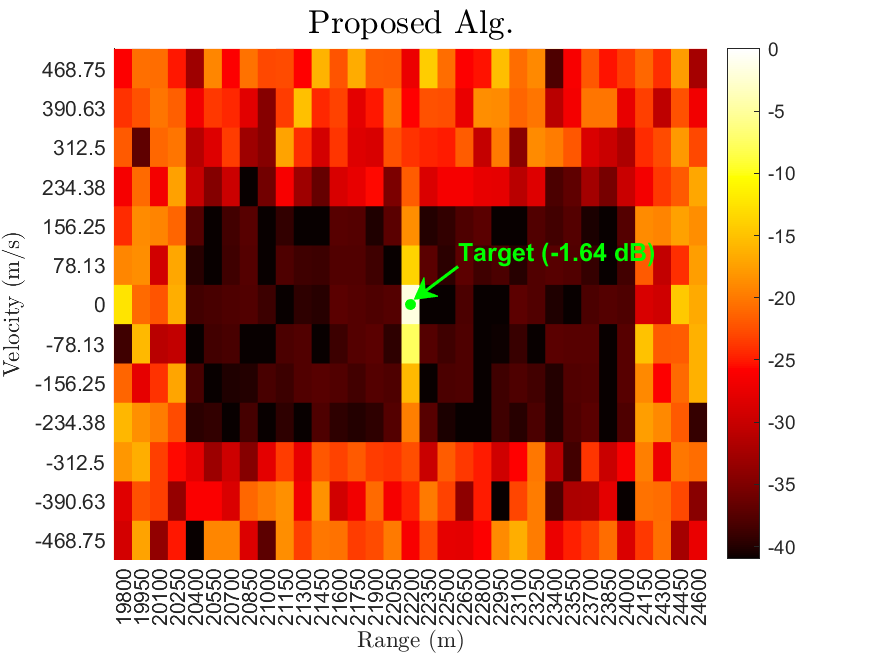}
    \label{Fig:scenario_2_proposed_alg}} 
    \par \vspace{-0.5mm}  
    \subfloat[]{\includegraphics[width=1.7in]{./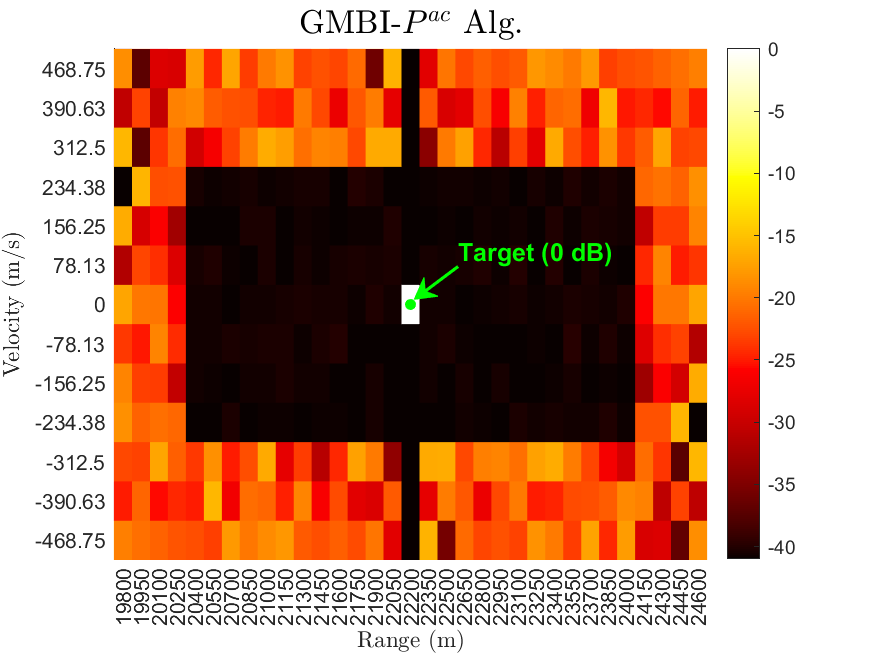}
    \label{Fig:scenario_1_GMBI}} 
    \hspace{-0.5cm} 
    \subfloat[]{\includegraphics[width=1.7in]{./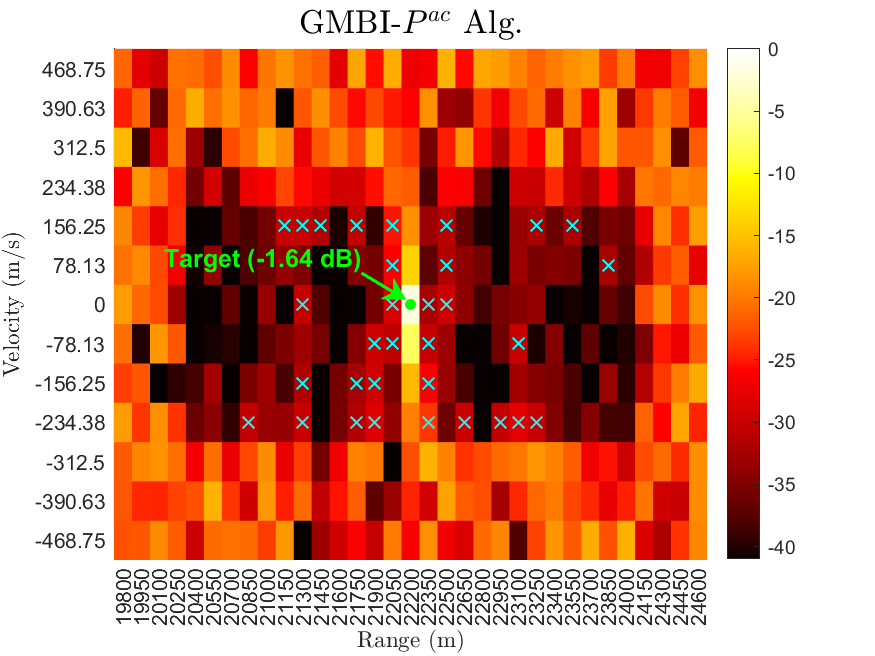}  
    \label{Fig:scenario_2_GMBI}} 
    \par \vspace{-0.5mm}  
    \subfloat[]{\includegraphics[width=1.7in]{./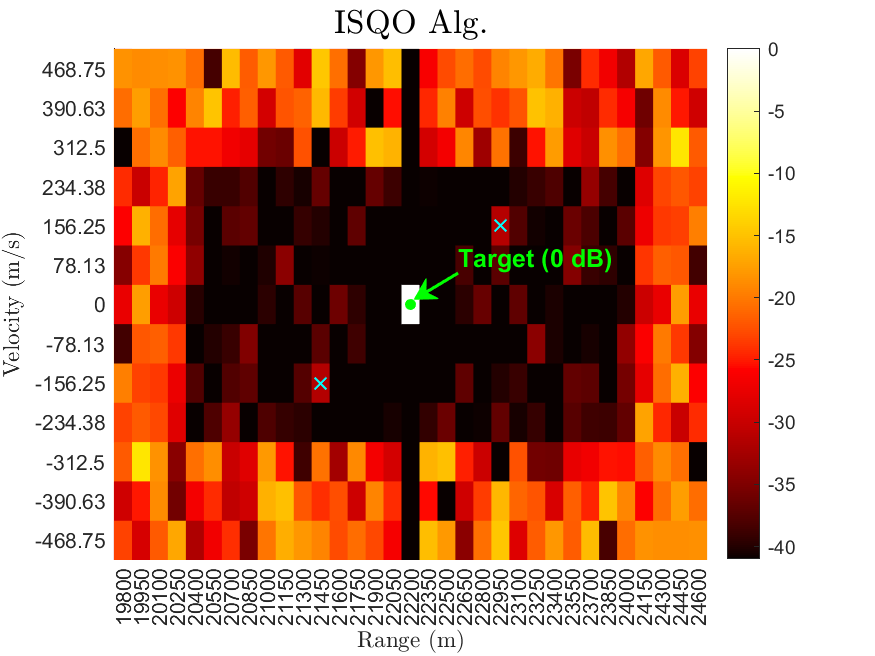}
    \label{Fig:scenario_1_ISQO}} 
    \hspace{-0.5cm} 
    \subfloat[]{\includegraphics[width=1.7in]{./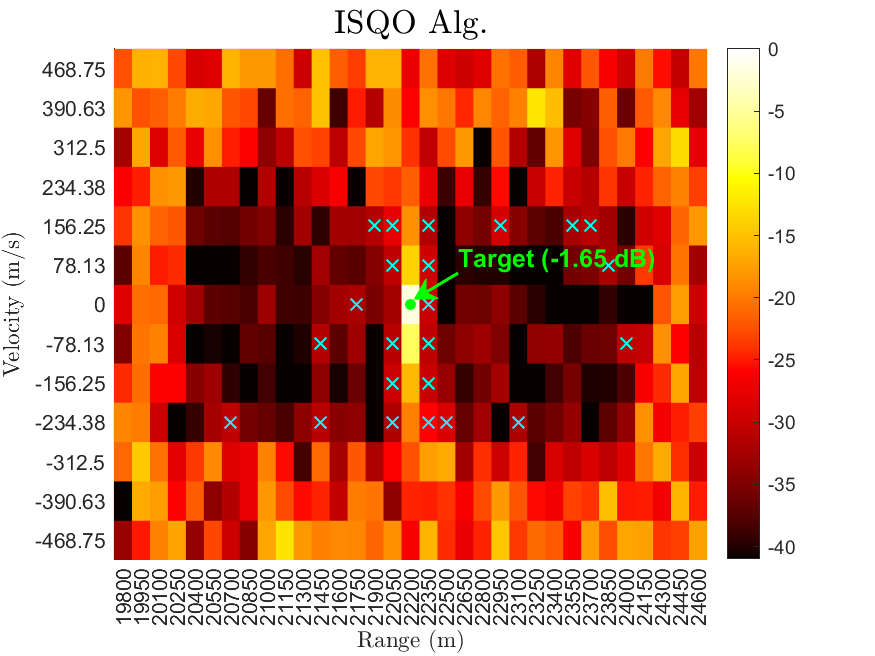}  
    \label{Fig:scenario_2_ISQO}} 
    \vspace{-0.1mm}
    \caption{Range-velocity plots for target detection scenario using the waveforms generating in Section \ref{sec:Numerical_Ex} Case 3 (\mbox{$N=128$}): (a) and (b) from the proposed algorithm, (c) and (d) from the GMBI-$P^{ac}$ algorithm \cite{Chen22}, (e) and (f) from the ISQO algorithm \cite{Yang18} (The occurrence of false alarms are marked with ``x").} \label{Fig:Range-velocity_images}
\end{figure}

\section{Conclusion} \label{sec:conclusion}
In this paper, a new method is proposed to design unimodular waveforms based on the discrete-time ambiguity function (DTAF) optimization problem, in which the peak sidelobe level (PSL) is suppressed over a continuous Doppler frequency shift region of interest (ROI).
Unlike previous works that suppress PSL only at grid points in the delay-Doppler plane and exhibit low false alarms only when the target speed is an integer multiple of the system's velocity resolution, the waveform designed by the proposed method can prevent false alarms for detecting a target with an arbitrary velocity.
The proposed method tackles the non-convex semi-infinite programming (SIP) by showing that infinite constraints over a continuous Doppler frequency shift region are equivalent to a finite number of linear matrix inequalities (LMIs).
The resulting semidefinite programming (SDP) with a rank-one constraint is then solved by the sequential rank-one constraint relaxation (SROCR) algorithm.
Numerical results demonstrate that the proposed method outperforms existing methods by a considerable margin in achieving a lower normalized true peak sidelobe level (NTPSL), with PSL evaluated over a continuous Doppler frequency shift ROI.
Moreover, the waveform designed by the proposed algorithm effectively prevents false alarms when detecting a target moving at arbitrary speeds.
In the future, it would be valuable to establish optimality conditions for the proposed method, such as deriving the dual problem.
\appendices
\section{Derivation of \eqref{eq:CTAF_DTAF}}  \label{appendix:CTAF_DTAF_deriv}

Firstly, by substituting $x(t)$, as defined in (\ref{eq:x(t)}), into (\ref{def:CTAF}), we obtain (\ref{eq:CTAF_DTAF_deriv_d}). 
Letting $l=n-m$, we have that \eqref{eq:CTAF_DTAF_deriv_d} becomes \eqref{eq:CTAF_DTAF_deriv_e}. 
Moreover, letting $s=t-nT_s$, we obtain \eqref{eq:CTAF_DTAF_deriv_f}. 
Lastly, \eqref{eq:CTAF_DTAF_deriv_h} and \eqref{eq:CTAF_DTAF_deriv_i} are derived, where $A(l,f)$ and $\chi(\tau,f)$ are defined in \eqref{def:A(l,f)} and \eqref{def:chi}, respectively.
Thus, the derivation of \eqref{eq:CTAF_DTAF} is completed.

\begin{footnotesize}
\begin{figure*}[ht!]  
    \centering
    \begin{subequations} \label{eq:CTAF_DTAF_deriv}
    \begin{align}
        \hspace{-8mm}
        A_c(\tau,f) 
        &=\sum_{n=0}^{N-1}\sum_{m=0}^{N-1}x_nx_m^*\int_{-\infty}^{\infty}p(t-nT_s)p(t-mT_s-\tau)e^{-j2\pi f(t-\tau)}dt \label{eq:CTAF_DTAF_deriv_d}\\
        &=\sum_{l=-N+1}^{N-1}\sum_{n={\max}(0,l)}^{{\min}(N-1,N-1+l)}x_nx_{n-l}^*\int_{-\infty}^{\infty}p(t-nT_s)p(t-(n-l)T_s-\tau)e^{-j2\pi f(t-\tau)}dt \label{eq:CTAF_DTAF_deriv_e}\\
        &=\sum_{l=-N+1}^{N-1}\sum_{n={\max}(0,l)}^{{\min}(N-1,N-1+l)}x_nx_{n-l}^*\int_{-\infty}^{\infty}p(s)p(s+lT_s-\tau)e^{-j2\pi f(s + nT_s-\tau)}ds \label{eq:CTAF_DTAF_deriv_f}\\
        &=\sum_{l=-N+1}^{N-1}\sum_{n={\max}(0,l)}^{{\min}(N-1,N-1+l)}x_nx_{n-l}^*e^{-j2\pi fT_s(n-l)}\int_{-\infty}^{\infty}p(s)p(s+lT_s-\tau)e^{-j2\pi f(s + lT_s-\tau)}ds \label{eq:CTAF_DTAF_deriv_h}\\
        &=\sum_{l=-N+1}^{N-1}A(l,f)\chi(\tau-lT_s,f). \label{eq:CTAF_DTAF_deriv_i}
    \end{align}
    \end{subequations}  
\end{figure*}
\end{footnotesize}

\section{Proof of Theorem \ref{thm:SIP_LMI}} \label{appendix:proof_thm_SIP_LMI}

To prove Theorem \ref{thm:SIP_LMI}, we first introduce the following lemma, which will be applied to deal with the infinite constraints in (\ref{P2_2_b}) later.

\begin{lemma} (A special case of Theorem 3 in \cite{Dumitrescu2006} and Corollary 4.27 in \cite{Dumitrescu2017}, considering a univariate trigonometric polynomial) \label{lemma:BRL}
    Define $H(f)$ as
    \begin{equation} \label{eq:causal_trig}
        H(f) = \sum_{n = 0}^{K} h_ne^{-j2\pi fn}, \quad \forall f \in \left[-\tfrac{1}{2}, \tfrac{1}{2}\right].
    \end{equation}
    Given a positive real number $\gamma$, the following inequality holds
    \begin{equation} \label{BRL_inequality}
        |H(f)| \leq \gamma,\ \forall f \in [\alpha, \beta] \subset \left[-\tfrac{1}{2}, \tfrac{1}{2}\right]
    \end{equation}
    if and only if there exist matrices ${\bf Q} \in \mathbb{H}_+^{K+1}$ and ${\bf P} \in \mathbb{H}_+^{K}$ such that
    \begin{subequations} \label{eq:BRL_LMI}
    \begin{align} 
        & \gamma^2 \delta_n = \mathrm{Tr}\left( \boldsymbol{\Theta}_{K+1}^{(n)}{\bf Q} \right) 
        + \mathrm{Tr}\Big({\bf{\Phi}}_{K}^{(n)} {\bf P} \Big), \ n \in \mathbb{Z}_{K+1}, \label{eq:BRL_LMI_a} \\ 
        \renewcommand{\arraystretch}{0.8}
        &\left[\begin{array}{cc}
            \mathbf{Q} & \mathbf{h} \\
            \mathbf{h}^H & 1
        \end{array}\right] \succeq \mathbf{O}_{K+2},  \label{eq:BRL_LMI_b}
    \end{align}
    \end{subequations}
    where
    ${\bf \Theta}_{K}^{(n)}$ is defined in \eqref{def:Elementary_Toeplitz_matrix},
    ${\delta}_n$ is defined in \eqref{def:delta},
    ${\bf{\Phi}}_{K}^{(n)}$ is defined in \eqref{def:Phi},
    and $\mathbf{h} = [h_0, h_1, \ldots, h_{N-1}]^T$.

    It is worth noting that Theorem 3 in \cite{Dumitrescu2006} and Corollary 4.27 in \cite{Dumitrescu2017} deal with a multivariate trigonometric polynomial, while Lemma \ref{lemma:BRL} specifically considers a univariate trigonometric polynomial as $H(f)$ defined in \eqref{eq:causal_trig}.
\end{lemma}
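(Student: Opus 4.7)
My plan is to establish the lemma as the univariate specialization of the positivstellensatz for non-negative trigonometric polynomials on a sub-arc of the unit circle, namely Theorem 3 of \cite{Dumitrescu2006} and Corollary 4.27 of \cite{Dumitrescu2017}. The reduction is to rewrite $|H(f)|\leq\gamma$ as the scalar non-negativity $\gamma^2-|H(f)|^2\geq 0$ on $[\alpha,\beta]$, and then translate this into a weighted sum-of-squares decomposition whose Gram matrices are exactly $Q$ and $P$.

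For the sufficiency direction, I would first record the algebraic dictionary between the matrix encoding and the trigonometric-polynomial encoding. Writing $\mathbf{e}_M(f)=[1,e^{-j2\pi f},\ldots,e^{-j2\pi f(M-1)}]^T$, one has the standard identity expressing $\mathbf{e}_M(f)^H X\,\mathbf{e}_M(f)$ as a trigonometric polynomial whose Fourier coefficients are $\mathrm{Tr}(\boldsymbol{\Theta}_M^{(n)}X)$. A direct computation using \eqref{def:Phi} then shows that $\mathrm{Tr}(\boldsymbol{\Phi}_K^{(n)}P)$ is the $n$th Fourier coefficient of $d(f)\,\mathbf{e}_K(f)^H P\,\mathbf{e}_K(f)$, where $d(f)=d_0+d_1 e^{-j2\pi f}+d_1^* e^{j2\pi f}$ is the weight function determined by \eqref{eq:d0}--\eqref{eq:d1} and is non-negative precisely on $[\alpha,\beta]$. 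Reading \eqref{eq:BRL_LMI_a} as coefficient matching, it asserts the polynomial identity $\gamma^2 = \mathbf{e}_{K+1}(f)^H Q\,\mathbf{e}_{K+1}(f)+d(f)\,\mathbf{e}_K(f)^H P\,\mathbf{e}_K(f)$ for every $f$. Applying the Schur complement to \eqref{eq:BRL_LMI_b} gives $Q\succeq\mathbf{h}\mathbf{h}^H$, hence $\mathbf{e}_{K+1}(f)^H Q\,\mathbf{e}_{K+1}(f)\geq|H(f)|^2$; together with $P\succeq\mathbf{O}_K$ and $d(f)\geq 0$ on $[\alpha,\beta]$, this delivers $\gamma^2\geq|H(f)|^2$ on the arc, which is \eqref{BRL_inequality}.

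For the necessity direction, I would invoke the Markov--Luk\'acs-type representation for trigonometric polynomials: any real trigonometric polynomial $R(f)$ of degree $K$ that is non-negative on $[\alpha,\beta]$ admits a decomposition $R(f)=|A(f)|^2+d(f)|B(f)|^2$, where $A$ has degree $K$, $B$ has degree $K-1$, and $d(f)$ is the weight specified above. Applying this to $R(f)=\gamma^2-|H(f)|^2$ produces such $A$ and $B$; let $Q'$ and $P$ be their Gram matrices in the bases $\mathbf{e}_{K+1}(f)$ and $\mathbf{e}_K(f)$ respectively (both automatically PSD), and set $Q=Q'+\mathbf{h}\mathbf{h}^H$. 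Then $Q\succeq\mathbf{h}\mathbf{h}^H$, so \eqref{eq:BRL_LMI_b} holds by Schur complement, and $\mathbf{e}_{K+1}(f)^H Q\,\mathbf{e}_{K+1}(f)+d(f)\,\mathbf{e}_K(f)^H P\,\mathbf{e}_K(f) = |H(f)|^2 + |A(f)|^2 + d(f)|B(f)|^2 = \gamma^2$; matching Fourier coefficients via the same dictionary recovers \eqref{eq:BRL_LMI_a}.

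The main obstacle is the Markov--Luk\'acs / positivstellensatz representation invoked in the necessity direction; its proof in \cite{Dumitrescu2006, Dumitrescu2017} relies on spectral factorization and semidefinite duality, and I would import it here as a black box. The remaining work is purely mechanical: verifying that the weight function $d(f)$ corresponding to the arc $[\alpha,\beta]$ has Fourier coefficients $d_0,d_1,d_1^*$ as defined in \eqref{eq:d0}--\eqref{eq:d1}, and that the $\boldsymbol{\Phi}$ construction in \eqref{def:Phi} realizes multiplication by $d(f)$ on the trace-coefficient side, both of which are routine trigonometric identities once the correspondence is set up.
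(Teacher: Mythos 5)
Your proposal is correct and follows essentially the same route as the paper: reduce $|H(f)|\leq\gamma$ to non-negativity of $\gamma^2-|H(f)|^2$ on the arc, invoke the Markov--Luk\'acs/Gram-pair representation from \cite{Dumitrescu2006,Dumitrescu2017} for that polynomial, absorb the rank-one Gram matrix $\mathbf{h}\mathbf{h}^H$ of $|H(f)|^2$ into $\mathbf{Q}$, and convert $\mathbf{Q}\succeq\mathbf{h}\mathbf{h}^H$ into the LMI via the Schur complement. If anything, you are more explicit than the paper on the sufficiency direction (using $d(f)\geq 0$ on $[\alpha,\beta]$ and $\mathbf{P}\succeq\mathbf{O}$ to conclude $\gamma^2\geq|H(f)|^2$), which the paper leaves largely implicit.
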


\begin{proof}
    The proof of Lemma \ref{lemma:BRL}, based on the proof idea of \cite{Dumitrescu2006} and \cite{Dumitrescu2006}, is provided in Appendix \ref{appendix:proof_thm_BRL} using the notation of this paper for clarity.
\end{proof}

To apply Lemma \ref{lemma:BRL} for tackling the infinite constraints in (\ref{P2_2_b}),
we further define a polynomial
\begin{align} \label{eq: delay-product signal definition}
    H^{(l)}(f_D) &= \sum_{n={\max}(0,l)}^{{\min}(N-1,N-1+l)}h_n^{(l)}e^{-j2\pi {f_D}n}, \\
    & \forall l \in \mathbb{Z}_N,\ \forall f_D \in [-f_R, f_R] \subseteq \big[\tfrac{-1}{2},\tfrac{1}{2}\big], \nonumber 
\end{align}
where the coefficients of $H^{(l)}(f_D)$ are $h^{(l)}_n = x_nx_{n-l}^*$, $\forall n \in \mathbb{Z}_N$, $\forall l \in \mathbb{Z}_N$. 
Then, it can be verified that $| H^{(l)}(f_D) | = | A(l  ,f_D) |$:

\vspace{-2mm}
\begin{small}
\begin{subequations}
\begin{align}
    | H^{(l)}(f_D) |
    &= \left| \sum_{n={\max}(0,l)}^{{\min}(N-1,N-1+l)}h_n^{(l)}e^{-j2\pi {f_D}n} \right| \nonumber \\
    &= \left| \sum_{n={\max}(0,l)}^{{\min}(N-1,N-1+l)}x_nx_{n-l}^*e^{-j2\pi {f_D}n} \right| \nonumber \\
    &= \left| \sum_{n={\max}(0,l)}^{{\min}(N-1,N-1+l)}x_nx_{n-l}^*e^{-j2\pi {f_D}(n-l)} \right|  
    = | A(l,f_D) |. \nonumber
\end{align}
\end{subequations}
\end{small}

Introduce the auxiliary variables ${\bf h}^{(l)} = \left[ h_0^{(l)}, \ldots, h_{N-1}^{(l)} \right]^T$, $\forall l \in \{1,...,L\}$.
The optimization problem (\ref{P2_2}) can be reformulated as
\begin{subequations} \label{P3}
\begin{align} 
    &\underset{\substack{t \in \mathbb{R}, \ {\bf x} \in {\mathbb C}^N \\{\bf h}^{(1)}, \ldots, {\bf h}^{(L)} \in {\mathbb C}^N}}{\mathrm{minimize}} \quad t \label{P3_a}\\ 
    &\hspace{-1mm} \mathrm{subject\ to} \hspace*{1mm} \left|H^{(l)}(f_D)\right|^2 \leq t, \nonumber \\
    & \hspace*{28mm} \forall l \in \{1,...,L\},\ \forall f_D \in [-f_R,f_R] \label{P3_b}\\
    &\hspace*{16mm} h_n^{(l)} = x_nx_{n-l}^*,\ \forall n \in \mathbb{Z}_N,\ \forall l \in \{1,...,L\} \label{P3_c}\\
    &\hspace*{16mm} |x_n| = 1,\ \forall n \in \mathbb{Z}_N. \label{P3_d}
\end{align}
\end{subequations}

With respect to each time delay bin $l$, \mbox{$\forall l \in \{1,...,L\}$}, Lemma \ref{lemma:BRL} is applied to tackle (\ref{P3_b}) by setting \mbox{$K=N-1$} in (\ref{eq:causal_trig}), \mbox{$\gamma = \sqrt{t}$} and \mbox{$\beta = -\alpha = 2\pi f_R$} in (\ref{BRL_inequality}). 
Further introducing variables \mbox{${\bf Q}^{(1)}, \ldots, {\bf Q}^{(L)} \in {\mathbb H}_+^{N}$} and \mbox{${\bf P}^{(1)}, \ldots, {\bf P}^{(L)} \in {\mathbb H}_+^{N-1}$},
and substituting \eqref{P3_b} with \eqref{eq:BRL_LMI_a} and \eqref{eq:BRL_LMI_b}, we then obtain the optimization problem \eqref{P_LMI_0} and prove that the problem \eqref{P2_2} is equivalent to \eqref{P_LMI_0}.

\section{Proof of Lemma \ref{lemma:BRL}} \label{appendix:proof_thm_BRL}

Before proceeding with the proof of Lemma \ref{lemma:BRL}, we introduce some notation to facilitate the later derivation.
Define $\mathbb{C}[z]$ as the set of trigonometric polynomials with complex coefficient:
$\mathbb{C}[z] = \{ {\breve R}(z)=\sum_{n=-K}^{K}r_nz^{-n}, r_{-n}=r_n^* | K\in\mathbb{N}, \ r_{-K},..., r_K \in \mathbb{C} \}$,
where \mbox{$\mathbb{N}=\{0,1,2,3,...\}$} is the set of natural numbers.
The degree of ${\breve R}\in\mathbb{C}[z]$ is defined as 
${\rm deg} \ {\breve R} = \max\{n \ | \ r_n\neq 0, n = -K, -K+1, ...,K\}$.
Additionally, on the unit circle (i.e., \mbox{$z=e^{j2\pi f}$}, $\forall f \in [\frac{-1}{2},\frac{1}{2}]$), we define \mbox{$R(f)={\breve R}(e^{j2\pi f})=\sum_{n=-K}^{K}r_ne^{-j2\pi f n}$}.
Moreover, the causal polynomial is denoted as 
\begin{equation} \label{eq:H}
    {\breve H}(z)=\sum_{n=0}^{K}h_nz^{-n}, \ h_0,...,h_K \in \mathbb{C}.
\end{equation}
The degree of ${\breve H}$ is defined as  ${\rm deg} \ {\breve H} = \max\{n \ | \ h_n\neq 0, n = 0, 1, ...,K \}$.
Furthermore, we define 
\begin{equation} \label{def:H_breve_conj}
    {\breve H}^*(z)=\sum_{n=0}^{K}h_n^*z^{-n}.
\end{equation}

To prove Lemma \ref{lemma:BRL}, we first introduce Lemma \ref{Thm:nonnegative_trigpoly} (Theorem 1.15 of \cite{Dumitrescu2017}), which characterizes a trigonometric polynomial that is nonnegative over an interval.

\begin{lemma} (Theorem 1.15 of \cite{Dumitrescu2017}) \label{Thm:nonnegative_trigpoly}
    Let \mbox{${\breve R}\in\mathbb{C}[z]$} with \mbox{${\rm deg}\ {\breve R} = K$}.
    The following statement holds: 
    \mbox{$R(f) \geq 0$}, \mbox{$\forall f \in [\alpha, \beta] \subset \left[-\tfrac{1}{2}, \tfrac{1}{2}\right]$}, if and only if ${\breve R}(z)$ can be expressed as:
    \begin{equation} \label{eq:relation_between_R_F_G}
        {\breve R}(z) = {\breve F}(z){\breve F}^*(z^{-1}) + E_{\alpha \beta}(z){\breve G}(z){\breve G}^*(z^{-1}),
    \end{equation}
    where ${\breve F}(z)$ and ${\breve G}(z)$ are causal polynomials defined in (\ref{eq:H})
    with \mbox{${\rm deg} \ {\breve F}\leq K$} and \mbox{${\rm deg} \ {\breve G}\leq K-1$}
    (but at least one of ${\rm deg} \ {\breve F}= n$ or  ${\rm deg} \ {\breve G}= n-1$ must hold), 
    ${\breve F}^*(z)$ and ${\breve G}^*(z)$ are defined by \eqref{def:H_breve_conj},
    \begin{align}
        &E_{\alpha \beta}(z) = d_0 + d_1^*z + d_1z^{-1},\\
        d_0 &= -\left( \tfrac{\tan\tfrac{\alpha}{2}\tan\tfrac{\beta}{2}+1}{2} \right),\label{def:d0} \\ 
        d_1 &= \tfrac{1-\tan\tfrac{\alpha}{2}\tan\tfrac{\beta}{2}}{4} + j\tfrac{\tan\tfrac{\alpha}{2}+\tan\tfrac{\beta}{2}}{4}. \label{def:d1}
    \end{align}
\end{lemma}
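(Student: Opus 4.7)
The plan is to prove Lemma \ref{lemma:BRL} by reducing the modulus bound to a nonnegativity statement about a real trigonometric polynomial, invoking Lemma \ref{Thm:nonnegative_trigpoly} to get a structured decomposition, and then converting that decomposition into the LMI form via the standard Gram-matrix parametrization of sum-of-squares trigonometric polynomials. Concretely, I set
\begin{equation}
R(f) \;=\; \gamma^{2} - |H(f)|^{2} \;=\; \gamma^{2} - \breve H(e^{j2\pi f})\breve H^{*}(e^{-j2\pi f}),
\end{equation}
so that $|H(f)| \le \gamma$ on $[\alpha,\beta]$ is equivalent to $R(f)\ge 0$ on $[\alpha,\beta]$, and $\breve R \in \mathbb C[z]$ has degree at most $K$.

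For the forward direction, I would apply Lemma \ref{Thm:nonnegative_trigpoly} to obtain causal polynomials $\breve F$ and $\breve G$ (of degree $\le K$ and $\le K-1$) with $\breve R(z) = \breve F(z)\breve F^{*}(z^{-1}) + E_{\alpha\beta}(z)\breve G(z)\breve G^{*}(z^{-1})$. Writing $\mathbf f$ and $\mathbf g$ for their coefficient vectors, I would use the key identity that the $z^{-n}$ coefficient of $\breve F(z)\breve F^{*}(z^{-1})$ equals $\mathbf f^{H}\boldsymbol\Theta_{K+1}^{(n)}\mathbf f=\mathrm{Tr}\!\big(\boldsymbol\Theta_{K+1}^{(n)}\mathbf f\mathbf f^{H}\big)$, and similarly the $z^{-n}$ coefficient of $E_{\alpha\beta}(z)\breve G(z)\breve G^{*}(z^{-1})$ equals $\mathrm{Tr}\!\big(\boldsymbol\Phi_{K}^{(n)}\mathbf g\mathbf g^{H}\big)$, since multiplication by $E_{\alpha\beta}(z)=d_{0}+d_{1}^{*}z+d_{1}z^{-1}$ corresponds exactly to the affine shift $d_{0}\boldsymbol\Theta_{K}^{(n)}+d_{1}^{*}\boldsymbol\Theta_{K}^{(n+1)}+d_{1}\boldsymbol\Theta_{K}^{(n-1)}$ defining $\boldsymbol\Phi_{K}^{(n)}$. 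Setting $\mathbf P=\mathbf g\mathbf g^{H}\succeq\mathbf O_{K}$ and $\mathbf Q=\mathbf f\mathbf f^{H}+\mathbf h\mathbf h^{H}\succeq\mathbf O_{K+1}$ and matching the $z^{-n}$ coefficients of $\breve R(z)=\gamma^{2}-\breve H(z)\breve H^{*}(z^{-1})$ on both sides then yields exactly \eqref{eq:BRL_LMI_a}, because the $\mathbf h\mathbf h^{H}$ block of $\mathbf Q$ contributes precisely the $z^{-n}$ coefficient of $\breve H(z)\breve H^{*}(z^{-1})$ that cancels the $-|H(f)|^{2}$ part. Finally, $\mathbf Q\succeq \mathbf h\mathbf h^{H}$ is equivalent to the Schur-complement form \eqref{eq:BRL_LMI_b}.

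For the converse, I would start from \eqref{eq:BRL_LMI_b}, invoke the Schur complement to obtain $\mathbf Q_{0}:=\mathbf Q-\mathbf h\mathbf h^{H}\succeq\mathbf O_{K+1}$, and factor $\mathbf Q_{0}=\sum_{i}\mathbf f_{i}\mathbf f_{i}^{H}$ and $\mathbf P=\sum_{j}\mathbf g_{j}\mathbf g_{j}^{H}$ by eigendecomposition. Using linearity of trace and the same Gram identity in reverse, the right-hand side of \eqref{eq:BRL_LMI_a}, minus the contribution of $\mathbf h\mathbf h^{H}$, gives the $z^{-n}$ coefficients of $\sum_{i}\breve F_{i}(z)\breve F_{i}^{*}(z^{-1})+E_{\alpha\beta}(z)\sum_{j}\breve G_{j}(z)\breve G_{j}^{*}(z^{-1})$, which together with the $\mathbf h\mathbf h^{H}$ term reconstructs $\gamma^{2}\delta_{n}$ coefficient-by-coefficient and hence recovers the identity $\gamma^{2}-|H(f)|^{2}=\sum_{i}|F_{i}(f)|^{2}+E_{\alpha\beta}(e^{j2\pi f})\sum_{j}|G_{j}(f)|^{2}$. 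Since $E_{\alpha\beta}(e^{j2\pi f})\ge 0$ exactly on $[\alpha,\beta]$ (this is the property of $d_{0},d_{1}$ already exploited in Lemma \ref{Thm:nonnegative_trigpoly}), both summands are nonnegative there, giving $|H(f)|\le\gamma$ on $[\alpha,\beta]$.

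The main obstacle I anticipate is the bookkeeping around the Gram parametrization, in two places. First, verifying carefully that multiplication by the trinomial $E_{\alpha\beta}(z)$ commutes with the $\mathrm{Tr}(\cdot\,\mathbf P)$ parametrization to produce exactly the matrix $\boldsymbol\Phi_{K}^{(n)}$ as in \eqref{def:Phi}, including index-boundary cases where $\boldsymbol\Theta_{K}^{(n\pm 1)}$ vanishes. Second, ensuring that the single $\breve F,\breve G$ supplied by Lemma \ref{Thm:nonnegative_trigpoly} and the (possibly higher-rank) PSD matrices $\mathbf Q_{0},\mathbf P$ appearing in the LMI are interchangeable: any nonnegative SoS combination $\sum_{i}\breve F_{i}\breve F_{i}^{*}$ can be reduced to (or enlarged from) a single rank-one Gram factor via spectral factorization, so the equivalence with the rank-free LMI form is not lost. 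Once these two points are in hand, the argument is symmetric in both directions and yields the stated equivalence.
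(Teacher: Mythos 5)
Your proposal does not prove the statement in question. The statement to be established is Lemma~\ref{Thm:nonnegative_trigpoly} itself --- the equivalence between $R(f)\ge 0$ on $[\alpha,\beta]$ and the existence of the decomposition \eqref{eq:relation_between_R_F_G} --- but what you have written is a proof of the downstream result Lemma~\ref{lemma:BRL}, and your very first step is to \emph{invoke} Lemma~\ref{Thm:nonnegative_trigpoly} as a black box to obtain the polynomials $\breve F$ and $\breve G$. Relative to the task this is circular: the one claim you must justify is assumed, and everything you actually argue (forming $\gamma^2-|H(f)|^2$, the Gram-matrix/trace parametrization of the coefficients, the role of $\boldsymbol\Phi_K^{(n)}$ as the coefficient-space image of multiplication by $E_{\alpha\beta}(z)$, the Schur-complement step, and the rank-one versus general-rank interchangeability) belongs to the proofs of Lemma~\ref{Thm:nonnegative_trace_parametrization} and Lemma~\ref{lemma:BRL}, which the paper carries out separately in Appendices~\ref{appendix:proof_thm_BRL} and~\ref{appendix:proof_thm_nonnegative_trace_parametrization}.

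What is missing is any argument for the decomposition theorem itself. The ``if'' direction needs the verification that $E_{\alpha\beta}(e^{j2\pi f})\ge 0$ exactly for $f\in[\alpha,\beta]$ (a short computation with $d_0$ and $d_1$ that you mention only in passing), so that both terms of \eqref{eq:relation_between_R_F_G} are nonnegative on the segment while $\breve F\breve F^*$ alone need not certify anything off it. The ``only if'' direction is the substantive part: it is a Fej\'er--Riesz/Markov--Luk\'acs-type positivity certificate for trigonometric polynomials nonnegative on an arc of the unit circle, typically proved via a M\"obius (tangent half-angle) change of variable mapping the arc to a real interval, followed by spectral factorization and a degree count yielding ${\rm deg}\ \breve F\le K$ and ${\rm deg}\ \breve G\le K-1$. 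None of this appears in your proposal. Note also that the paper itself does not reprove this lemma --- it defers entirely to Theorem~1.15 of \cite{Dumitrescu2017} --- so if the intent is to keep the lemma as a citation, your material should be relocated to the proof of Lemma~\ref{lemma:BRL}; as a proof of Lemma~\ref{Thm:nonnegative_trigpoly} it does not stand.
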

\begin{proof}
    The proof of Lemma \ref{Thm:nonnegative_trigpoly} is provided in Theorem 1.15 of \cite{Dumitrescu2017}.
\end{proof}

As expressing the polynomial in terms of its coefficients is preferred for formulating the optimization problem, the coefficients of the trigonometric polynomial ${\breve R}(z)$ defined in (\ref{eq:relation_between_R_F_G}) can be obtained through the following Lemma \ref{Thm:nonnegative_trace_parametrization}.
\begin{lemma} (A special case of Theorem 2 in \cite{Dumitrescu2006} and Corollary 4.16 in \cite{Dumitrescu2017}, considering a univariate trigonometric polynomial) \label{Thm:nonnegative_trace_parametrization}
    Let \mbox{${\breve R}\in\mathbb{C}[z]$} with \mbox{${\rm deg}\ {\breve R} = K$}.
    The following statement holds:     
    \mbox{$R(f) \geq 0$}, 
    \mbox{$\forall f \in [\alpha, \beta] \subset \left[-\tfrac{1}{2}, \tfrac{1}{2}\right]$}, if and only if there exists positive semidefinite matrices \mbox{${\bf Q} \in \mathbb{H}_+^{K+1}$} and \mbox{${\bf P} \in \mathbb{H}_+^K$} such that
    \begin{align} \label{eq:LMI}
        r_n 
        &= \mathrm{Tr}\left( \boldsymbol{\Theta}_{K+1}^{(n)}{\bf Q} \right) 
        + \mathrm{Tr}\Big( {\bf{\Phi}}_K^{(n)} {\bf P} \Big), \ \forall n \in \mathbb{Z}_{K+1},
    \end{align}
    where
    \begin{equation} \label{eq:Phi_K}
        {\bf{\Phi}}_K^{(n)} =d_0 \boldsymbol{\Theta}_{K}^{(n)} + d_1^* \boldsymbol{\Theta}_{K}^{(n+1)} 
        + d_1 \boldsymbol{\Theta}_{K}^{(n-1)},
    \end{equation}
    $d_0$ is defined in \eqref{def:d0}, $d_1$ is defined in \eqref{def:d1}, and ${\bf {\Theta}}_{K}^{(n)}$ is defined in (\ref{def:Elementary_Toeplitz_matrix}). 
    In addition, $r_{-n}=r_n^*$, $\forall n \in \mathbb{Z}_{K+1}$, holds.
    
    Note that Theorem 2 in \cite{Dumitrescu2006} and Corollary 4.16 in \cite{Dumitrescu2017} deal with a multivariate trigonometric polynomial, while Lemma \ref{Thm:nonnegative_trace_parametrization} specifically considers a univariate trigonometric polynomial \mbox{${\breve R}\in\mathbb{C}[z]$}.
\end{lemma}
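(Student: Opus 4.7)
My plan is to deduce the trace parametrization as a matrix-theoretic reformulation of the factorization furnished by Lemma \ref{Thm:nonnegative_trigpoly}. The bridge is a dictionary translating coefficients of causal polynomials of the form $\breve{H}(z)\breve{H}^*(z^{-1})$ into traces against elementary Toeplitz matrices. First, I would establish the following correspondence: for a causal polynomial $\breve{H}(z) = \sum_{n=0}^{K} h_n z^{-n}$ with coefficient vector ${\bf h} = [h_0,\ldots,h_K]^T$, the coefficient of $z^{-n}$ in $\breve{H}(z)\breve{H}^*(z^{-1})$ equals $\mathrm{Tr}(\boldsymbol{\Theta}_{K+1}^{(n)} {\bf h}{\bf h}^H)$. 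This follows by direct expansion: the coefficient is $\sum_m h_{m+n} h_m^*$, which is exactly the $n$-th subdiagonal sum of ${\bf h}{\bf h}^H$ selected by $\boldsymbol{\Theta}_{K+1}^{(n)}$ through the trace. By linearity of trace, this identity extends to any PSD matrix ${\bf Q} = \sum_i {\bf f}_i {\bf f}_i^H$, for which $\mathrm{Tr}(\boldsymbol{\Theta}_{K+1}^{(n)} {\bf Q})$ gives the coefficient of $z^{-n}$ in $\sum_i \breve{F}_i(z)\breve{F}_i^*(z^{-1})$. An analogous computation for $E_{\alpha\beta}(z)\breve{G}(z)\breve{G}^*(z^{-1})$ uses that $E_{\alpha\beta}(z) = d_0 + d_1^* z + d_1 z^{-1}$ shifts the coefficient sequence of $\breve{G}(z)\breve{G}^*(z^{-1})$ by $0, +1, -1$ with weights $d_0, d_1^*, d_1$; the $n$-th coefficient therefore equals $d_0\,\mathrm{Tr}(\boldsymbol{\Theta}_K^{(n)}{\bf P}) + d_1^*\,\mathrm{Tr}(\boldsymbol{\Theta}_K^{(n+1)}{\bf P}) + d_1\,\mathrm{Tr}(\boldsymbol{\Theta}_K^{(n-1)}{\bf P})$, which collapses to $\mathrm{Tr}(\boldsymbol{\Phi}_K^{(n)} {\bf P})$ with $\boldsymbol{\Phi}_K^{(n)}$ as in \eqref{eq:Phi_K}, again by linearity of trace.

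With this dictionary in hand, both directions of the lemma follow. For necessity, assume $R(f) \geq 0$ on $[\alpha,\beta]$. Lemma \ref{Thm:nonnegative_trigpoly} supplies causal polynomials $\breve{F}$ and $\breve{G}$ of degrees at most $K$ and $K-1$ satisfying $\breve{R}(z) = \breve{F}(z)\breve{F}^*(z^{-1}) + E_{\alpha\beta}(z)\breve{G}(z)\breve{G}^*(z^{-1})$; setting ${\bf Q} = {\bf f}{\bf f}^H$ and ${\bf P} = {\bf g}{\bf g}^H$, with zero-padding of the coefficient vectors if needed, yields PSD matrices and, via the dictionary, exactly the trace relations in \eqref{eq:LMI}. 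For sufficiency, spectrally decompose the given matrices as ${\bf Q} = \sum_i {\bf f}_i {\bf f}_i^H$ and ${\bf P} = \sum_j {\bf g}_j {\bf g}_j^H$, and let $\breve{F}_i, \breve{G}_j$ be the corresponding causal polynomials. The trace relations \eqref{eq:LMI} state precisely that $\breve{R}(z) = \sum_i \breve{F}_i(z)\breve{F}_i^*(z^{-1}) + E_{\alpha\beta}(z)\sum_j \breve{G}_j(z)\breve{G}_j^*(z^{-1})$; evaluating at $z = e^{j2\pi f}$ yields $R(f) = \sum_i |\breve{F}_i(e^{j2\pi f})|^2 + E_{\alpha\beta}(e^{j2\pi f})\sum_j |\breve{G}_j(e^{j2\pi f})|^2$, which is nonnegative on $[\alpha,\beta]$ because $E_{\alpha\beta}(e^{j2\pi f}) \geq 0$ there, a property built into the construction of $d_0$ and $d_1$ in Lemma \ref{Thm:nonnegative_trigpoly}.

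The main technical obstacle is the combinatorial bookkeeping in the $E_{\alpha\beta}$ step, specifically ensuring that the shifted indices $n+1$ and $n-1$ are handled consistently with the boundary convention in \eqref{def:Elementary_Toeplitz_matrix} (where $\boldsymbol{\Theta}_K^{(m)}$ vanishes once $|m| \geq K$) and that the degree bounds in Lemma \ref{Thm:nonnegative_trigpoly} align with the matrix sizes $(K+1)\times(K+1)$ for ${\bf Q}$ and $K\times K$ for ${\bf P}$. Beyond this, the proof is essentially a matrix-trace translation of Lemma \ref{Thm:nonnegative_trigpoly}, which already carries the analytic content—that a univariate trigonometric polynomial is nonnegative on an interval iff it admits the stated factored form with $E_{\alpha\beta}$.
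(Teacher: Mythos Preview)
Your proposal is correct and follows essentially the same route as the paper: translate the factorization of Lemma~\ref{Thm:nonnegative_trigpoly} into trace identities via the correspondence $[z^{-n}]\,\breve{H}(z)\breve{H}^*(z^{-1}) = \mathrm{Tr}(\boldsymbol{\Theta}_{K+1}^{(n)}{\bf h}{\bf h}^H)$, then expand the $E_{\alpha\beta}$ multiplication as an index shift to obtain $\boldsymbol{\Phi}_K^{(n)}$. The paper's proof in Appendix~\ref{appendix:proof_thm_nonnegative_trace_parametrization} carries out exactly this computation, writing $\breve{F}(z)=\boldsymbol{\psi}_K^T(z^{-1}){\bf f}$ and using $\boldsymbol{\psi}_K(z)\boldsymbol{\psi}_K(z^{-1})^T=\sum_n \boldsymbol{\Theta}_{K+1}^{(n)}z^{-n}$; it only treats the necessity direction explicitly (producing rank-one ${\bf Q}={\bf f}{\bf f}^H$, ${\bf P}={\bf g}{\bf g}^H$), whereas you also spell out sufficiency via spectral decomposition of general PSD ${\bf Q},{\bf P}$, which is a welcome completion of the ``if and only if''.
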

\begin{proof}
    The proof of Lemma \ref{Thm:nonnegative_trace_parametrization}, provided in Appendix \ref{appendix:proof_thm_nonnegative_trace_parametrization}, is derived using our notation for clarity.
\end{proof}

Lastly, with Lemma \ref{Thm:nonnegative_trace_parametrization}, the proof of Lemma \ref{lemma:BRL} is shown below.
\begin{proof}
According to the Riesz-Fejér Theorem 
    \footnote{The Riesz-Fejér Theorem states that, given \mbox{${\breve R}\in\mathbb{C}[z]$} with \mbox{${\rm deg}\ {\breve R} = K$},
    the following holds:
    \mbox{${R}(f) \geq 0$}, \mbox{$\forall f \in \left[-\tfrac{1}{2}, \tfrac{1}{2}\right]$} if and only if there exists
    ${\breve H}(z)$, a causal polynomial defined in (\ref{eq:H}), such that ${\breve R}(z) = {\breve H}(z){\breve H}^*(z^{-1})$. 
    Additionally, on the unit circle, $R(f) = |H(f)|^2$. 
    For the proof, please refer to Theorem 1.1 in \cite{Dumitrescu2017}.},
given a causal polynomial $H(f)$ defined in (\ref{eq:causal_trig}),
we can have \mbox{${\breve R}\in\mathbb{C}[z]$} and
\begin{equation} \label{eq:R(f)}
    R(f) 
    = |H(f)|^2,
\end{equation}
where \mbox{$R(f) = \sum_{n=-K}^{K}r_ne^{-j2\pi fn}$} and $R(f)\geq 0$, $\forall f\in[\frac{-1}{2}, \frac{1}{2}]$.
Additionally, with (\ref{eq:R(f)}), $|H(f)|\leq \gamma$ in (\ref{BRL_inequality}) is equivalent to 
\begin{equation} \label{eq:R<=|H|^2}
    R(f)
    \leq \gamma^2.
\end{equation}
Let \mbox{$B(f)=\gamma^2=\sum_{n=-K}^{K}\gamma^2\delta_n e^{-j2 \pi fn}$},
where $\delta_n$ is defined in (\ref{def:delta}). 
Then (\ref{eq:R<=|H|^2}) can be written as $R(f)\leq B(f)$, leading to:
\begin{equation}
    \sum_{n=-K}^{K}(\gamma^2\delta_n-r_n)e^{-j2 \pi fn}\geq0, \ \forall f \in \left[-\tfrac{1}{2}, \tfrac{1}{2}\right].
\end{equation}  
Let $a_n=\gamma^2\delta_n-r_n$.
According to Lemma \ref{Thm:nonnegative_trace_parametrization}, there exist
${\bf K} \in \mathbb{H}_+^{K+1}$, ${\bf L} \in \mathbb{H}_+^{K}$,
${\bf Y} \in \mathbb{H}_+^{K+1}$, and ${\bf Z} \in \mathbb{H}_+^{K}$ such that
\begin{align}
    r_n &= \mathrm{Tr}\left( \boldsymbol{\Theta}_{K+1}^{(n)}{\bf K} \right) 
    + \mathrm{Tr}\Big( {\bf{\Phi}}_{K}^{(n)} {\bf L} \Big), \label{eq:r_n} \\
    a_n &= \mathrm{Tr}\left( \boldsymbol{\Theta}_{K+1}^{(n)}{\bf Y} \right) 
    + \mathrm{Tr}\Big( {\bf{\Phi}}_{K}^{(n)} {\bf Z} \Big).  \label{eq:a_n}
\end{align}
We can then derive $\gamma^2\delta_n = r_n + a_n =\mathrm{Tr}( \boldsymbol{\Theta}_{K+1}^{(n)}{\bf Q}) 
+ \mathrm{Tr}( {\bf{\Phi}}_{K}^{(n)} {\bf P} )$,
where we let \mbox{${\bf Q}={\bf K}+{\bf Y}$}, \mbox{${\bf P}={\bf L}+{\bf Z}$}, thus establishing (\ref{eq:BRL_LMI_a}).
Additionally, it follows that 
\begin{equation} \label{ineq:Q_K_P_L}
    {\bf Q}\succeq{\bf K} \text{ and } {\bf P}\succeq{\bf L}.
\end{equation}

Define \mbox{${\bf h}=[h_0, h_1,...,h_K]^T$} and $h_n,\ \forall n\in\{0,1,...,K\}$, are the coefficients of the causal polynomial $H(f)$ defined in (\ref{eq:causal_trig}).
Additionally, let \mbox{${\bf v}_K(f)=[1,e^{j2\pi f},...,e^{j2\pi f K}]^T$}.
We then have {$H(f)=\sum_{n = 0}^{K} h_ne^{-j2\pi fn}={\bf v}_K(f)^H{\bf h}$}.
Further, letting \mbox{${\bf H}={\bf h}{\bf h}^H\in \mathbb{H}_+^{K+1}$},
we obtain
\begin{small}
\begin{align}
    R(f) 
    &=|H(f)|^2 
    ={\bf v}_K(f)^H{\bf H}{\bf v}_K(f) 
    ={\rm Tr}\big({\bf v}_K(f){\bf v}_K(f)^H{\bf H}\big)  \nonumber\\
    &\hspace{-10mm}= {\rm Tr}\left(\sum_{n=-K}^{K}{\bf \Theta}^{(n)}_{K+1}e^{-j2\pi fn}{\bf H}\right)  
    = \sum_{n=-K}^{K} {\rm Tr}\big({\bf{\Theta}}^{(n)}_{K+1}{\bf H}\big) e^{-j2\pi fn}.  
\end{align}   
\end{small}

\vspace{-2mm}
\noindent
Then we have the coefficients of $R(f)$ be expressed as
\mbox{$r_n = {\rm Tr}({\bf{\Theta}}^{(n)}_{K+1}{\bf H})$}.
Furthermore, by comparing it  with (\ref{eq:r_n}),
we set ${\bf K}={\bf H}$ and ${\bf L}= {\bf O}_K$. 
As a result, we obtain ${\bf Q}\succeq{\bf H}={\bf h}{\bf h}^H$ and ${\bf P}\succeq{\bf O}_K$ according to (\ref{ineq:Q_K_P_L}).
By Schur complement, ${\bf Q}\succeq{\bf h}{\bf h}^H$ is equivalent to \eqref{eq:BRL_LMI_b}, and then the proof of Lemma \ref{lemma:BRL} is completed.

\end{proof}

\section{Proof of Lemma \ref{Thm:nonnegative_trace_parametrization}} \label{appendix:proof_thm_nonnegative_trace_parametrization}

Let \mbox{${\breve R}\in\mathbb{C}[z]$} with \mbox{${\rm deg}\ {\breve R} = K$}. 
According to Lemma \ref{Thm:nonnegative_trigpoly},
${\breve R}(z)$ can be expressed in the form of (\ref{eq:relation_between_R_F_G}).
Let $\boldsymbol{\psi}_K(z) = [1,z,...,z^K]^T$ and ${\bf f}=[f_0, f_1..,f_K]^T$,
and then we have
${\breve F}(z)=\sum_{n=0}^{K}f_nz^{-n}=\boldsymbol{\psi}_K^T(z^{-1}){\bf f}$ and
{${\breve F}^*(z^{-1})=\sum_{n=0}^{K}f_n^*z^{n}={\bf f}^H\boldsymbol{\psi}_K(z)$}.
Also, letting ${\bf Q}={\bf f} {\bf f}^H  \in \mathbb{H}_+^{K+1}$, we obtain 
\begin{footnotesize}
\begin{align}\label{eq:F(z)}
    {\breve F}(z){\breve F}^*(z^{-1})
    &=\boldsymbol{\psi}_K(z^{-1})^T{\bf Q}\boldsymbol{\psi}_K(z) 
    ={\rm Tr}\Big(\boldsymbol{\psi}_K(z)\boldsymbol{\psi}_K(z^{-1})^T{\bf Q}\Big) \nonumber\\
    &\hspace{-15mm}= {\rm Tr}\left(\sum_{n=-K}^{K}{\bf \Theta}^{(n)}_{K+1}z^{-n}{\bf Q}\right) 
    = \sum_{n=-K}^{K} {\rm Tr}\big({\bf{\Theta}}^{(n)}_{K+1}{\bf Q}\big) z^{-n}.
\end{align}        
\end{footnotesize}

\vspace{-2mm}
\noindent
Similarly, let {$\boldsymbol{\psi}_{K-1}(z)= [1,z,...,z^{K-1}]^T$}, ${\bf g}=[g_0, g_1, ...,g_{K-1}]^T$, and ${\bf P}={\bf g} {\bf g}^H \in \mathbb{H}_+^{K}$,
we get
${\breve G}(z){\breve G}^*(z^{-1}) = \sum_{n=-(K-1)}^{K-1} {\rm Tr}\big({\bf{\Theta}}^{(n)}_{K}{\bf P}\big)z^{-n}$.
Then, we can have (\ref{eq:relation_between_R_F_G}) rewritten as
\begin{footnotesize}
\begin{align} \label{eq:R_Q_P}
    \hspace{-3mm}{\breve R}(z) 
    &=\sum_{n=-K}^{K} {\rm Tr}\big({\bf{\Theta}}^{(n)}_{K+1}{\bf Q}\big)z^{-n} + d_0 \sum_{n=-(K-1)}^{K-1} {\rm Tr}\big({\bf{\Theta}}^{(n)}_{K}{\bf P}\big)z^{-n} \nonumber \\
    &+ \Big(d_1^*z + d_1z^{-1}\Big) \sum_{m=-(K-1)}^{K-1} {\rm Tr}\big({\bf{\Theta}}^{(m)}_{K}{\bf P}\big)z^{-m}.
\end{align}
\end{footnotesize}

\vspace{-2mm}
\noindent
Let ${\breve U}(z) =(d_1^*z + d_1z^{-1}) \sum_{m=-(K-1)}^{K-1} {\rm Tr}({\bf{\Theta}}^{(m)}_{K}{\bf P})z^{-m}$.        
Moreover, denote $p_m={\rm Tr}({\bf{\Theta}}^{(m)}_{K}{\bf P}), \ \forall m \in \{ -(K-1), ..., K-1 \}$, then we can derive:
\begin{footnotesize}
\begin{align}  
    \hspace{-2mm}
    {\breve U}(z) 
    &=\ d_1^*\sum_{m = -(K-1)}^{K-1} p_mz^{-(m-1)} + d_1\sum_{m = -(K-1)}^{K-1} p_mz^{-(m+1)}\nonumber\\
    &=\ d_1^*\sum_{n = -K}^{K-2} p_{n+1}z^{-n} + d_1\sum_{n = -K+2}^{K} p_{n-1}z^{-n} \nonumber\\
    &= \sum_{n = -K}^{-K+1} d_1^*p_{n+1}z^{-n} + \sum_{n = K-1}^{K} d_1p_{n-1}z^{-n} \nonumber \\
    &+ \sum_{n = -K+2}^{K-2} (d_1^*p_{n+1} + d_1p_{n-1})z^{-n}. \label{eq:U_brave} 
\end{align}        
\end{footnotesize}

\noindent
We can then rewrite \eqref{eq:U_brave} as ${\breve U}(z)=\sum_{n = -K}^{K} u_nz^{-n}$, where
\begin{small}
\begin{subequations}\label{eq:u_m}
\begin{align} 
    \hspace{-25mm} 
    u_n &= 
    \begin{cases}
        d_1^*p_{n+1} + d_1p_{n-1},&\ n \in \{0,1, \ldots, K-2\}\\
        d_1p_{n-1},&\ n \in \{ K-1, K\}
    \end{cases}  \label{eq:un_1}  \\  
    &= \mathrm{Tr}\left[ \left( d_1^*\boldsymbol{\Theta}_{K}^{(n+1)} + d_1\boldsymbol{\Theta}_{K}^{(n-1)} \right){\bf P} \right], \ n \in \{0, 1,  \ldots, K\}, \label{eq:un_2}
\end{align}
\end{subequations}
\end{small}

\noindent
and $u_{-n}=u_n^*$.
Notice that (\ref{eq:un_1}) is equal to (\ref{eq:un_2}) since ${\bf {\Theta}}_{K}^{(n)}={\bf O}_{K}$ if $|n|\geq K$ as specified in (\ref{def:Elementary_Toeplitz_matrix}).
We then can have (\ref{eq:R_Q_P}) be expressed as
\mbox{${\breve R}(z)=\sum_{n=-K}^{K}r_nz^{-n}$} with $r_n$, \mbox{$\forall n \in \{0,1,...,K\}$}, expressed as
\begin{small}
\begin{align} \label{eq:r_n_trace_param}
    r_n 
    &= \mathrm{Tr}\left( \boldsymbol{\Theta}_{K+1}^{(n)}{\bf Q} \right) 
    + \mathrm{Tr}\Big( \Big( d_0 \boldsymbol{\Theta}_{K}^{(n)} + d_1^* \boldsymbol{\Theta}_{K}^{(n+1)} 
    + d_1 \boldsymbol{\Theta}_{K}^{(n-1)} \Big) {\bf P} \Big) \nonumber \\ 
    &= \mathrm{Tr}\left( \boldsymbol{\Theta}_{K+1}^{(n)}{\bf Q} \right) +  \mathrm{Tr}\Big({\bf{\Phi}}_K^{(n)} {\bf P} \Big)
\end{align}
\end{small}

\vspace{-2mm}
\noindent
with ${\bf{\Phi}}_K^{(n)}$ defined in (\ref{eq:Phi_K})
and $r_{-n}=r_n^*$,   
we then complete the proof.

\bibliographystyle{IEEEtran}
\bibliography{IEEEabrv,Bibliography}

\vfill

\end{document}